\newtheorem{theorem}{Theorem}
\newtheorem{lemma}[theorem]{Lemma}
\newtheorem{corollary}[theorem]{Corollary}
\newtheorem{observation}[theorem]{Observation}
\DeclareMathOperator{\Tr}{Tr}
\DeclareMathOperator{\rank}{rank}
\DeclareMathOperator{\conv}{conv}
\DeclareMathOperator{\id}{id}
\newcommand{\bra}[1]{\langle #1\rvert}
\newcommand{\ket}[1]{\lvert #1\rangle}
\newcommand{\mean}[1]{\langle #1\rangle}
\newcommand{\braket}[2]{\langle #1\vert #2\rangle}
\newcommand{\ketbra}[2]{\vert #1 \rangle \langle #2 \vert}
\newcommand{\abs}[1]{\lvert #1\rvert}
\newcommand{\norm}[1]{\lVert #1\rVert}
\newcommand{\vect}[1]{\bm{#1}}
\newcommand{\md}{\mathrm{d}}
\newcommand{\mi}{\mathrm{i}}
\newcommand{\E}{\mathds{J}}
\newcommand{\I}{\mathds{1}}
\newcommand{\FA}{\mathrm{~\forall~}}
\newcommand{\Sep}{\mathrm{SEP}}
\newcommand{\maxover}[1][]{\underset{#1}{\mathrm{max}}}
\newcommand{\minover}[1][]{\underset{#1}{\mathrm{min}}}
\newcommand{\findover}[1][]{\underset{#1}{\mathrm{find}}}
\newcommand{\subto}{\mathrm{~s.t.}}
\newcommand{\cE}{\mathcal{E}}
\newcommand{\cF}{\mathcal{F}}
\newcommand{\cH}{\mathcal{H}}
\newcommand{\cM}{\mathcal{M}}
\newcommand{\cO}{\mathcal{O}}
\newcommand{\cP}{\mathcal{P}}
\newcommand{\cS}{\mathcal{S}}
\newcommand{\dC}{\mathds{C}}
\newcommand{\dF}{\mathds{F}}
\newcommand{\dR}{\mathds{R}}
\newcommand{\dN}{\mathds{N}}
\newcommand{\tL}{\widetilde{\Lambda}}
\newcommand{\tX}{\widetilde{X}}
\newcommand{\bmid}{\;\big|\;}
\begin{document}

\title{Quantum-Inspired Hierarchy for Rank-Constrained Optimization}

\author{Xiao-Dong Yu}
\affiliation{Department of Physics, Shandong University, Jinan 250100, China}
\affiliation{Naturwissenschaftlich-Technische Fakult\"at, Universit\"at Siegen,
Walter-Flex-Str. 3, D-57068 Siegen, Germany}

\author{Timo Simnacher}
\affiliation{Naturwissenschaftlich-Technische Fakult\"at, Universit\"at Siegen,
Walter-Flex-Str. 3, D-57068 Siegen, Germany}

\author{H. Chau Nguyen}
\affiliation{Naturwissenschaftlich-Technische Fakult\"at, Universit\"at Siegen,
Walter-Flex-Str. 3, D-57068 Siegen, Germany}

\author{Otfried G\"uhne}
\affiliation{Naturwissenschaftlich-Technische Fakult\"at, Universit\"at Siegen,
Walter-Flex-Str. 3, D-57068 Siegen, Germany}

\date{\today}

\begin{abstract}
  Many problems in information theory can be reduced to optimizations  over 
  matrices, where the rank of the matrices is constrained. We establish a link 
  between rank-constrained optimization and the theory of quantum entanglement. 
  More precisely, we prove that a large class of rank-constrained semidefinite 
  programs can be written as a convex optimization over separable quantum 
  states and, consequently, we construct a complete  hierarchy of semidefinite 
  programs for solving the  original problem. This hierarchy not only provides 
  a sequence of certified bounds for the rank-constrained optimization problem, 
  but also gives pretty good and often exact values in practice when the lowest 
  level of the hierarchy is considered. We demonstrate that our approach can be 
  used for relevant problems in quantum information processing, such as the 
  optimization over pure states, the characterization of mixed unitary channels 
  and faithful  entanglement, and quantum contextuality, as well as in 
  classical information theory including the maximum cut problem, 
  pseudo-Boolean optimization, and  the orthonormal representation of graphs.  
  Finally, we show that our ideas can be extended to rank-constrained quadratic 
  and higher-order programming.
\end{abstract}

\maketitle

\section{Introduction}
The mathematical theory of optimization has become a vital tool in various 
branches of science. This is not only due to the fact that some central problems
(e.g., finding the ground state energy of a given Hamiltonian in condensed matter 
physics) are by definition optimization problems, where mathematical methods
can be directly applied. It also turned out that other physical problems, which 
are not directly optimizations, can be reformulated as optimization tasks.

In physics, many efforts have been devoted to so-called semidefinite programs 
(SDPs), which is a class of highly tractable convex optimization problems. In 
quantum information theory, they have been used to characterize quantum 
entanglement \cite{Doherty.etal2002} and quantum correlations 
\cite{Navascues.etal2007}. In condensed matter physics, SDPs are relevant for 
solving ground-state problems \cite{Barthel.Huebener2012}. In conformal field 
theory, they have been employed for bootstrap problems 
\cite{SimmonsDuffin2015}. In fact, SDPs also found widespread applications in 
more general topics beyond physics; examples include the Shannon capacity of 
graphs \cite{Lovasz1979} and global polynomial optimization 
\cite{Lasserre2001,Parrilo2000}.

In many cases, however, one cannot directly formulate an SDP, as some 
non-convex constraints remain. Well-known examples are the characterization of 
quantum correlations for a fixed dimension 
\cite{Navascues.etal2008,Navascues.Vertesi2015}, the determination of the 
faithfulness of quantum entanglement \cite{Guehne.etal2021}, the ground state 
energy in spin glasses \cite{Barahona.etal1988}, and compressed sensing 
tomography \cite{Gross.etal2010}. Interestingly, these non-convex optimization 
problems share a common structure: They can be formulated as SDPs with an extra 
rank constraint.  Apart from these physics examples, rank-constrained 
optimizations are also widely-used in signal processing, model reduction, and 
system identification \cite{Markovsky2019}. All these applications demonstrate 
that to achieve significant progress, it would be highly desirable to develop 
techniques to deal with rank constraints in SDPs.

In this paper, we provide a method to deal with rank constraints based 
on the theory of quantum entanglement. More precisely, we prove that a large 
class of rank-constrained SDPs can be written as a convex optimization over 
separable two-party quantum states. Based on this, a complete hierarchy of SDPs 
can be constructed. In this way, we demonstrate that quantum information theory 
does not only benefit from ideas of optimization theory, but the results 
obtained in this field can also be used to study mathematical problems (like 
the Max-Cut problem) from a fresh perspective. Notably, unlike the widely-used 
local optimization methods \cite{Orsi.etal2006,Sun.Dai2017}, our method gives 
global bounds for the rank-constrained optimization. This makes our method 
particularly useful for certification problems in quantum information, where 
global bounds are usually necessary to establish conclusions with certainty.

In order to demonstrate the usefulness of our method, we first show that the 
optimization over pure quantum states or unitary matrices in quantum 
information can be naturally written as a rank-constrained optimization. This 
provides a complete characterization of faithful entanglement 
\cite{Weilenmann.etal2020,Guehne.etal2021} and of mixed unitary channels 
\cite{Alberti.Uhlmann1982,Lee.Watrous2020}. The second example concerns the  
dimension-bounded orthonormal representations of graphs \cite{Lovasz2019}, 
which is closely related to the existence of quantum contextuality in a given 
measurement configuration \cite{Cabello.etal2014,Ramanathan.Horodecki2014}.  
Finally, we consider the maximum cut (Max-Cut) problem 
\cite{Goemans.Williamson1995} and quadratic optimization over Boolean vectors 
\cite{Boros.Hammer2002}. These problems are not only very important in 
classical information theory, but also frequently encountered in statistical 
physics \cite{Kirkpatrick.etal1983} and complex networks 
\cite{Dorogovtsev.etal2008}. Remarkably, solving these optimization problems 
with noisy intermediate-scale quantum computers has drawn a lot of research 
interest in recent years 
\cite{Lucas2014,Boixo.etal2014,Farhi.etal2014,Farhi.Harrow2019}.  Consequently, 
our methods may be used to compare the operational performance of 
intermediate-scale quantum devices with different classical algorithms.

Our paper is organized as follows. In Sec.~II we explain the core idea of our 
method, first for matrices with complex entries, then for real matrices.  We 
also discuss how symmetries can be used to simplify the resulting sequence of 
SDPs. In Sec.~III we present several examples, where our methods can be 
applied. In Sec.~IV we discuss the more general form of rank-constrained SDPs, 
as well as rank-constrained quadratic and higher-order optimization problems.  
Finally, we conclude and discuss open problems.

\section{Rank-constrained SDP and quantum entanglement}\label{sec:main}

SDPs are widely used in various branches of science, especially in the quantum 
regime. One of the reasons is that density matrices are automatically 
positive semidefinite, so that related optimization problems naturally 
contain some semidefinite constraints. Another important reason that 
SDPs have drawn a lot of interest is that there are efficient algorithms for 
solving them \cite{Boyd.Vandenberghe2004}, moreover, symmetries can be used to 
drastically simplify the SDPs \cite{Vallentin2009,Rosset2018,Aguilar.etal2018}.  
In many cases, however, one cannot directly formulate an SDP, as some 
non-convex constraints remain.  This happens, for example, when the underlying 
quantum states are required to be pure or the quantum system is of bounded 
dimension.  These restrictions will introduce some extra rank constraints, 
which is the main focus of this paper.

The prototype optimization problem we are considering is given by
\begin{equation}
  \begin{aligned}
    &\maxover[\rho]  && \Tr(X\rho)\\
    &\subto && \Lambda(\rho)=Y,~\Tr(\rho)=1,\\
    &       && \rho\ge0,~\rank(\rho)\le k.
  \end{aligned}
  \label{eq:SDPRank}
\end{equation}
Here, $\rho$ and $X$ are $n \times n$ matrices with real ($\dF=\dR$) or complex 
($\dF=\dC$) entries, which are symmetric (respectively Hermitian). $\Lambda$ is 
a map from matrices in $\dF^{n\times n}$ to matrices in $\dF^{m\times m}$ and 
consequently $Y\in \dF^{m\times m}$.
In this way, the constraint $\Lambda(\rho)=Y$ denotes all affine equality 
constraints. While our main results are formulated for the rank-constrained SDP 
in Eq.~\eqref{eq:SDPRank}, we stress that our method can also be extended to 
more general cases with (semidefinite) inequality constraints $\Lambda(\rho)\le 
Y$, without the normalization condition $\Tr(\rho)=1$, or even without the 
positivity constraint $\rho\ge 0$.

\subsection{Optimization over complex matrices}

We start with $\dF=\dC$ for the optimization in Eq.~\eqref{eq:SDPRank}, where 
we can easily apply the results from quantum information.
Let $\cF$ be the feasible region of optimization~\eqref{eq:SDPRank}, i.e.,
\begin{equation}
  \hspace*{-.6ex}
  \cF=\big\{\rho\;\big|\;\Lambda(\rho)=Y,\Tr(\rho)=1,
  \rho\ge0,\rank(\rho)\le k\big\}.
  \label{eq:feasibleRegion}
\end{equation}
With the terminology in quantum information, $\cF$ is a subset 
of quantum states in the quantum system (or Hilbert space) 
$\dC^n$ \cite{Nielsen.Chuang2000}.

Now, we recall the notion of state purification in quantum information 
\cite{Nielsen.Chuang2000}. Let $\cH_1=\dC^n$ and $\cH_2=\dC^k$ be two quantum 
systems (Hilbert spaces). Then, a quantum state $\rho$ in $\cH_1$ satisfies 
that $\rank(\rho)\le k$ if and only if there exists a pure state 
$\ket{\varphi}\in\cH_1\otimes\cH_2$ such that 
$\Tr_2(\ket{\varphi}\bra{\varphi})=\rho$, where $\Tr_2(\cdot)$ is the partial 
trace operation on quantum system $\cH_2$. Thus, $\cF$ can be written as 
$\Tr_2(\cP)$, where
\begin{equation}
  \cP=\big\{\ket{\varphi}\bra{\varphi} \bmid
    \tL(\ket{\varphi}\bra{\varphi})=Y,~\braket{\varphi}{\varphi}=1\big\}
  \label{eq:defP}
\end{equation}
with $\tL(\cdot)=\Lambda[\Tr_2(\cdot)]$. Let $\conv(\cP)$ be the convex hull of 
$\cP$, i.e., all states of the form $\sum_i p_i\ket{\varphi_i}\bra{\varphi_i}$, 
where the $p_i$ form a probability distribution and 
$\ket{\varphi_i}\bra{\varphi_i}\in\cP$. By noting that the maximum value of 
a linear function can always be achieved at extreme points, the optimization in 
Eq.~\eqref{eq:SDPRank} is equivalent to
\begin{equation}
  \max_{\rho\in\cF}\Tr(X\rho)=\max_{\Phi\in\conv(\cP)}\Tr(\tX\Phi),
  \label{eq:purify}
\end{equation}
where $\tX=X\otimes\I_k$ with $\I_k$ being the identity operator on $\cH_2$.

\begin{figure}[t!]
  \centering
  \includegraphics[width=0.8\linewidth]{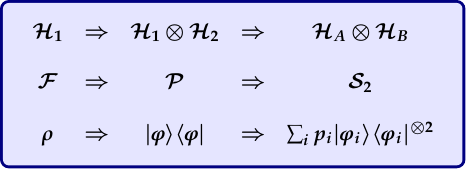}
  \caption{An illustration of the relations between the feasible region $\cF$, 
  the purification $\cP$, and the two-party extension $\cS_2$.  $\ket{\varphi}$ 
is a purification of $\rho$, $\cH_A=\cH_B=\cH_1\otimes\cH_2$, and 
$\ket{\varphi_i}$ are states in $\cP$.}
  \label{fig:relations}
\end{figure}

Equation~\eqref{eq:purify} implies that if we can fully characterize 
$\conv(\cP)$, optimization \eqref{eq:SDPRank} is solved. To this end, we need 
to introduce the notion of separable states that has been widely studied in 
quantum information \cite{Horodecki.etal2009,Guehne.Toth2009}.  More 
specifically, we let $\cH_A=\cH_B=\cH_1\otimes\cH_2=\dC^n\otimes\dC^k$ and 
define the separability cone $\Sep$ on $\cH_A\otimes\cH_B$ as
\begin{equation}
  \Sep=\conv\big\{M_A\otimes N_B\bmid M_A\ge 0,~N_B\ge 0\big\}.
  \label{eq:SEP}
\end{equation}
Physically, $\Sep$ is the set of all unnormalized separable quantum states 
(besides the zero matrix).  $\Sep$ is a proper convex cone, and its dual cone 
is given by
\begin{equation}
  \Sep^*=\big\{W_{AB}\bmid \Tr(W_{AB}\Phi_{AB})\ge 0~~\FA\Phi_{AB}\in\Sep\big\},
  \label{eq:SEPDual}
\end{equation}
which, in  the language of quantum information, corresponds to the set of 
entanglement witnesses and all positive 
semidefinite matrices.

Then, we consider the two-party extension of the purified feasible states,
\begin{equation}
  \cS_2=\conv\Big(\Big\{\ket{\varphi}\bra{\varphi}_A
      \otimes\ket{\varphi}\bra{\varphi}_B \;\Big|\;
  \ket{\varphi}\bra{\varphi}\in\cP\Big\}\Big),
  \label{eq:defS2}
\end{equation}
where $\ket{\varphi}_A$ and $\ket{\varphi}_B$ are the same state but belong to 
$\cH_A$ and $\cH_B$, respectively; see Fig.~\ref{fig:relations}. One can easily 
check that
\begin{equation}
  \Tr_B(\cS_2)=\conv(\cP).
  \label{eq:S22S}
\end{equation}
The benefit of introducing the two-party extension is that we can fully 
characterize $\cS_2$ with the separability cone $\Sep$, and hence $\conv(\cP)$ 
is also fully characterized.

The first necessary condition for $\Phi_{AB}\in\cS_2$ is that it is separable 
with respect to the bipartition $(A|B)$, i.e.,
\begin{equation}
  \Phi_{AB}\in\Sep,~\Tr(\Phi_{AB})=1.
  \label{eq:S2Sep}
\end{equation}
Second, $\Phi_{AB}\in\cS_2$ implies that it is within the 
symmetric subspace of $\cH_A\otimes\cH_B$. Mathematically, this can be written 
as
\begin{equation}
  V_{AB}\Phi_{AB}=\Phi_{AB},
  \label{eq:S2Symm}
\end{equation}
where $V_{AB}$ is the swap operator between $\cH_A$ and 
$\cH_B$, i.e., 
$V_{AB}\ket{\psi_1}_{A}\ket{\psi_2}_B=\ket{\psi_2}_{A}\ket{\psi_1}_B$ for any 
$\ket{\psi_1},\ket{\psi_2}\in\dC^n\otimes\dC^k$. The final necessary condition 
needed arises from Eq.~\eqref{eq:defP}, i.e., 
$\tL(\ket{\varphi}\bra{\varphi})=Y$ for $\ket{\varphi}\bra{\varphi}\in\cP$.  
Then, Eq.~\eqref{eq:defS2} implies that
\begin{equation}
  \tL_A\otimes\id_B(\Phi_{AB})=Y\otimes\Tr_A(\Phi_{AB})
  \label{eq:S2Marginal}
\end{equation}
for all $\Phi_{AB}\in\cS_2$, where $\tL_A(\cdot)$ is the map 
$\tL(\cdot)=\Lambda[\Tr_2(\cdot)]$ acting on system $\cH_A$ only, and $\id_B$ 
is the identity map on $\cH_B$. Hereafter, we also use a similar convention for 
matrices, e.g., $\tX_A$ denotes the matrix $\tX$ on system $\cH_A$.

Surprisingly, the conditions in 
Eqs.~(\ref{eq:S2Sep}, \ref{eq:S2Symm}, \ref{eq:S2Marginal}) are also sufficient 
for $\Phi_{AB}\in\cS_2$. The basic idea for the proof is that
Eqs.~(\ref{eq:S2Sep}, \ref{eq:S2Symm}) imply that $\Phi_{AB}$ is a separable 
state in the symmetric subspace, such that it always admits the form 
\cite{Toth.Guehne2009}
\begin{equation}
  \Phi_{AB}=\sum_ip_i\ket{\varphi_i}\bra{\varphi_i}_A
  \otimes\ket{\varphi_i}\bra{\varphi_i}_B.
  \label{eq:symmPhi}
\end{equation}
Then, Eq.~\eqref{eq:S2Marginal} implies that 
$\tL(\ket{\varphi_i}\bra{\varphi_i})=Y$ for all $i$; see 
Appendix~\ref{app:sepOpt} for the proof. With the full characterization of 
$\cS_2$ from Eqs.~(\ref{eq:S2Sep}, \ref{eq:S2Symm}, \ref{eq:S2Marginal}), we 
can directly rewrite the rank-constrained SDP in Eq.~\eqref{eq:SDPRank}. The
result is a so-called conic program, as one constraint is defined by the cone
of separable states.

\begin{theorem}
  For $\dF=\dC$, the rank-constrained SDP in Eq.~\eqref{eq:SDPRank} is 
  equivalent to the conic program
  \begin{align}
    &\maxover[\Phi_{AB}]&& \Tr(\tX_A\otimes\I_B\Phi_{AB})\label{eq:sepOpt}\\
    &\subto && 
    \Phi_{AB}\in\Sep,~\Tr(\Phi_{AB})=1,
    ~V_{AB}\Phi_{AB}=\Phi_{AB},\notag\\
    &       &&\tL_A\otimes\id_B(\Phi_{AB})
    =Y\otimes\Tr_A(\Phi_{AB}).\notag
  \end{align}
  \label{thm:sepOpt}
\end{theorem}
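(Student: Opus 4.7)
The plan is to identify the feasible set of \eqref{eq:sepOpt} with the two-party extension $\cS_2$ defined in \eqref{eq:defS2} and then reduce the objective via the purification correspondence \eqref{eq:purify}--\eqref{eq:S22S}. The necessity direction---every $\Phi_{AB}\in\cS_2$ satisfies the three constraints---is essentially immediate from the definition \eqref{eq:defS2}: each generating element $\ket{\varphi}\bra{\varphi}_A\otimes\ket{\varphi}\bra{\varphi}_B$ is a product pure state (hence in $\Sep$), has unit trace, is swap-invariant, and satisfies $\tL_A\otimes\id_B(\ket{\varphi}\bra{\varphi}_A\otimes\ket{\varphi}\bra{\varphi}_B)=\tL(\ket{\varphi}\bra{\varphi})\otimes\ket{\varphi}\bra{\varphi}_B=Y\otimes\ket{\varphi}\bra{\varphi}_B$ by \eqref{eq:defP}; all three properties pass to convex combinations.

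The nontrivial direction is sufficiency. The key structural input cited in the text is that any separable state supported on the symmetric subspace of $\cH_A\otimes\cH_B$ admits a decomposition of the form \eqref{eq:symmPhi}; so from $\Phi_{AB}$ satisfying \eqref{eq:S2Sep} and \eqref{eq:S2Symm} we may write $\Phi_{AB}=\sum_ip_i\ket{\varphi_i}\bra{\varphi_i}_A\otimes\ket{\varphi_i}\bra{\varphi_i}_B$ with $p_i\ge 0$ and $\sum_ip_i=1$. Substituting this into \eqref{eq:S2Marginal} yields
\begin{equation*}
\sum_ip_i\bigl[\tL(\ket{\varphi_i}\bra{\varphi_i})-Y\bigr]\otimes\ket{\varphi_i}\bra{\varphi_i}_B=0.
\end{equation*}
The main obstacle I foresee is upgrading this averaged identity to the pointwise statement $\tL(\ket{\varphi_i}\bra{\varphi_i})=Y$ for each $i$, which is exactly what is needed for every summand to live in $\cP$ and hence $\Phi_{AB}\in\cS_2$. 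I would attack this by first refining the decomposition to eliminate collinear $\ket{\varphi_i}$ (merging terms corresponding to the same ray), and then testing the displayed equation against a sufficiently rich family of operators on $B$, or equivalently invoking linear independence of the remaining rank-one projectors $\ket{\varphi_i}\bra{\varphi_i}_B$ in a suitably chosen refinement, to separate the contributions.

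Once the inclusion of feasible sets is proved in both directions, equating the objectives is routine: because $\tX=X\otimes\I_k$ and $\Tr_B(\cS_2)=\conv(\cP)$ by \eqref{eq:S22S}, one has $\Tr(\tX_A\otimes\I_B\Phi_{AB})=\Tr\bigl(\tX\,\Tr_B(\Phi_{AB})\bigr)$, and combining with \eqref{eq:purify} turns the maximum of $\Tr(\tX_A\otimes\I_B\Phi_{AB})$ over the feasible set of \eqref{eq:sepOpt} into $\max_{\rho\in\cF}\Tr(X\rho)$, closing the equivalence.
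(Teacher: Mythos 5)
Your necessity direction and the final objective reduction match the paper, and you correctly isolate the crux: upgrading the averaged identity $\sum_i p_i\bigl[\tL(\ketbra{\varphi_i}{\varphi_i})-Y\bigr]\otimes\ketbra{\varphi_i}{\varphi_i}_B=0$ to the pointwise statement $\tL(\ketbra{\varphi_i}{\varphi_i})=Y$. But the separation mechanism you propose has a genuine gap: after merging collinear terms, distinct rank-one projectors are in general \emph{not} linearly independent, and no family of test operators on $B$ can separate them. Already in $\dC^2$ one has $\ketbra{0}{0}+\ketbra{1}{1}=\ketbra{+}{+}+\ketbra{-}{-}$, so for \emph{arbitrary} Hermitian coefficients the averaged identity does not force pointwise vanishing (take coefficients $M,M,-M,-M$ with $M\ne 0$); correspondingly the matrix $\bigl[\abs{\braket{\varphi_j}{\varphi_i}}^2\bigr]$ for this quadruple is singular (it annihilates $(1,1,-1,-1)^T$), so testing against $N_B=\ketbra{\varphi_j}{\varphi_j}$, or any rich family of $N_B$, only determines the coefficients modulo the dependency. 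The pointwise claim is nevertheless true, but only because the coefficients are not arbitrary: they are images of the \emph{same} projectors under one fixed affine map, so every linear dependency among the $\ketbra{\varphi_i}{\varphi_i}$ is inherited with the same signs by the coefficients --- indeed, in the example above, linearity of $\tL$ forces $M=0$. Your plan never invokes this linearity at the crux, which is exactly where it would fail.

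The paper closes the gap with a sum-of-squares (Gram) argument that sidesteps independence entirely: define $\cE(\cdot)=\tL(\cdot)-\Tr(\cdot)Y$, so the marginal constraint reads $\cE_A\otimes\id_B(\Phi_{AB})=0$; applying the conjugate map on the $B$ side as well gives $\cE_A\otimes\cE_B^\dagger(\Phi_{AB})=\sum_i p_i\,E_i\otimes E_i^\dagger=0$ with $E_i=\cE(\ketbra{\varphi_i}{\varphi_i})$, and pairing with $E\otimes E^\dagger$ for arbitrary $E\in\dC^{m\times m}$ yields $\sum_i p_i\abs{\Tr(EE_i)}^2=0$, whence $\Tr(EE_i)=0$ for all $E$ and thus $E_i=0$, i.e., $\tL(\ketbra{\varphi_i}{\varphi_i})=Y$ for every $i$ with $p_i>0$. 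If you want to repair your route, this is the missing idea: before attempting to separate terms, push the projector on $B$ through the same map $\cE$, turning the dependency problem into a positivity (sum-of-squares) statement.
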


This conic program cannot be directly solved because the characterization of 
the separability cone $\Sep$ is still an NP-hard problem \cite{Gurvits2003}.  
Actually, this is expected, because the rank-constrained SDP is, in general, 
also NP-hard.  However, in quantum information theory many outer relaxations of 
the separability cone $\Sep$ are known.  For example, the positive partial 
transpose (PPT) criterion provides a pretty good approximation for 
low-dimensional quantum systems \cite{Peres1996,Horodecki.etal1996}. More 
generally, inspired by the symmetric extension criterion 
\cite{Doherty.etal2002,Werner1989b,Doherty.Wehner2012}, which says that the 
two-party reduced states of all $N$-party symmetric states are asymptotically 
separable when taking $N\to+\infty$, we obtain a complete hierarchy for 
rank-constrained optimization in Eq.~\eqref{eq:SDPRank}.

To express the hierarchy, we need to introduce the notion of symmetric 
subspaces for multiple parties. We label the $N$ parties as $A,B,\dots,Z$ and 
$\cH_A=\cH_B=\dots=\cH_Z=\cH_1\otimes\cH_2=\dC^n\otimes\dC^k$. For any 
$\cH^{\otimes N}:=\cH_A\otimes\cH_B\otimes\cdots\otimes\cH_Z$, the symmetric 
subspace is defined as
\begin{equation}
  \left\{\ket{\Psi}\in\mathcal{H}^{\otimes N} \;\Big|\;
  V_{\sigma}\ket{\Psi}=\ket{\Psi} ~~\FA \sigma\in S_N\right\},
  \label{eq:symmetricVector}
\end{equation}
where $S_N$ is the permutation group over $N$ symbols and $V_{\sigma}$ are the 
corresponding operators on the $N$ parties $A,B,\dots,Z$. Let $P_N^+$ denote 
the orthogonal projector onto the symmetric subspace of $\mathcal{H}^{\otimes 
N}$, then $P_N^+$ can be explicitly written as
\begin{equation}
  P_N^+=\frac{1}{N!}\sum_{\sigma\in S_N}V_\sigma.
  \label{eq:symmetricProjector}
\end{equation}
Hereafter, without ambiguity, we will also use $P_N^+$ to denote the 
corresponding symmetric subspace. For example, a state $\Phi_{AB\cdots Z}$ is 
within the symmetric space, i.e., $\Phi_{AB\cdots 
Z}=\sum_ip_i\ket{\Psi_i}\bra{\Psi_i}$ for $\ket{\Psi_i}\in P_N^+$, if and only 
if $P_N^+\Phi_{AB\cdots Z}P_N^+=\Phi_{AB\cdots Z}$.

\begin{figure}[t!]
  \centering
  \includegraphics[width=.45\textwidth]{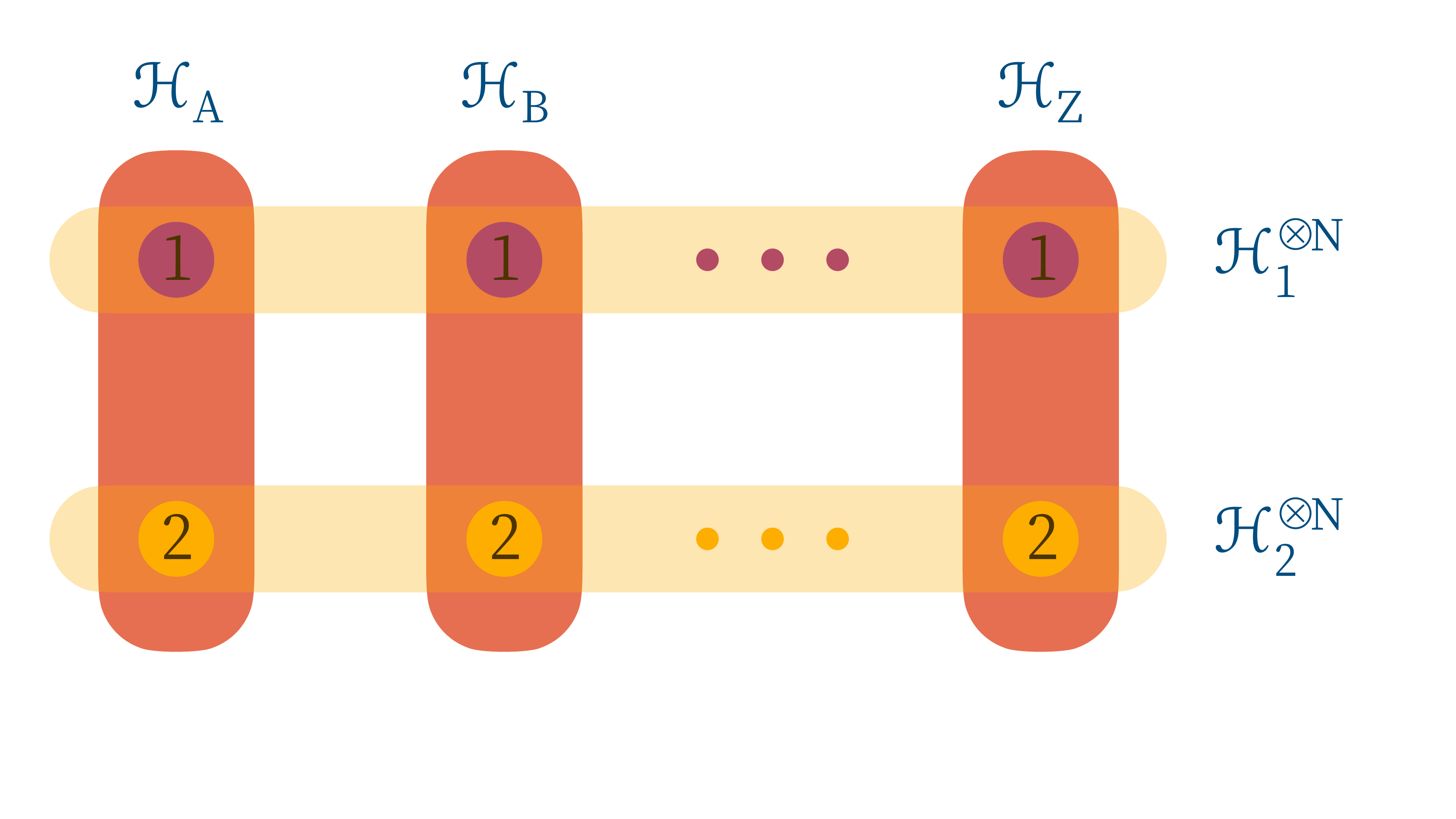}
  \caption{An illustration of the $N$-party extension $\Phi_{AB\cdots Z}$.  
    Here $\cH_1=\dC^n$ is the $n$-dimensional Hilbert space on which the 
    rank-constrained optimization is defined; $\cH_2=\dC^k$ is the 
    $k$-dimension auxiliary Hilbert space that is used for purifying the 
    rank-$k$ (more precisely, rank no larger than $k$) states in $\cH_1=\dC^n$.  
    Sometimes, we also denote $\cH_1^{\otimes N}$ as 
    $\cH_{A_1}\otimes\cH_{B_1}\otimes\cdots\otimes\cH_{Z_1}$ in order to 
    distinguish the Hilbert spaces $\cH_1$ for different parties (similarly for 
  $\cH_2^{\otimes N}=\cH_{A_2}\otimes\cH_{B_2}\otimes\cdots\otimes\cH_{Z_2}$).}
  \label{fig:Ncopy}
\end{figure}

Now we are ready to state the complete hierarchy for rank-constrained 
optimization; see Appendix~\ref{app:symExtOpt} for the proof.
\begin{theorem}
  For $\dF=\dC$, let $\xi$ be the solution of the rank-constrained SDP in 
  Eq.~\eqref{eq:SDPRank}. Then, for any $N$, $\xi$ is upper bounded by the 
  solution $\xi_N$ of the SDP hierarchy
  \begin{alignat}{2}
    &\maxover[\Phi_{AB\cdots Z}]  && \Tr(\tX_A\otimes\I_{B\cdots 
    Z}\Phi_{AB\cdots Z})\notag\\
    &\subto && \Phi_{AB\cdots Z}\ge 0,~\Tr(\Phi_{AB\cdots Z})=1, 
    \label{eq:symExtOpt}\\
    &       && P_N^+\Phi_{AB\cdots Z}P_N^+=\Phi_{AB\cdots Z},\notag\\
    &       && \tL_A\otimes\id_{B\cdots Z}(\Phi_{AB\cdots Z})\notag
    =Y\otimes\Tr_A(\Phi_{AB\cdots Z}).
  \end{alignat}
  Furthermore, the SDP hierarchy is complete in the sense that 
  $\xi_{N+1}\le\xi_N$ and $\lim_{N\to+\infty}\xi_N=\xi$.
  \label{thm:symExtOpt}
\end{theorem}
In addition, any criterion for the full separability of $\Phi_{AB\cdots Z}$ can 
be added to the optimization in Eq.~\eqref{eq:symExtOpt}, which can give 
a better upper bound for the optimization in Eq.~\eqref{eq:SDPRank}.  For 
example, the PPT criterion, more precisely, PPT with respect to all 
bipartitions, can also be added as additional constraints, which can give 
better upper bounds $\xi_N^T$, i.e.,  $\xi\le\xi_{N+1}^T\le\xi_N^T\le\xi_N$ and 
$\lim_{N\to+\infty}\xi_N^T=\xi$.  Furthermore, it is sometimes convenient to 
denote the solution of the SDP by relaxing the rank constraint in 
Eq.~\eqref{eq:SDPRank} as $\xi_1$, then we have $\xi_2\le \xi_1$.

Let us estimate the complexity of the SDP hierarchy in 
Theorem~\ref{thm:symExtOpt}.  For the $N$-th level of the hierarchy, the 
dimension of the matrix reads $\dim(\cH^{\otimes N})=(nk)^N$, but it can be 
further reduced by taking advantage of the fact that $\Phi_{AB\cdots Z}$ is 
within the symmetric subspace, which has the dimension
\begin{equation}
  \dim\left(P_N^+\right)
  =\binom{nk+N-1}{N}=\binom{nk+N-1}{nk-1}.
  \label{eq:dimsym}
\end{equation}
By noting that $k\le n$, Eq.~\eqref{eq:dimsym} implies that, for fixed 
dimension $n$, the complexity of the SDP grows polynomially with the level of 
the hierarchy $N$, and for a fixed level of hierarchy $N$, the complexity of 
the SDP also grows polynomially with the dimension $n$. Similar results also 
hold when considering the PPT criterion, because the partial transpose of 
$\Phi_{AB\cdots Z}$ with respect to any bipartition is within the tensor 
product of two symmetric subspaces $P_k^+\otimes P_{N-k}^+$ for some $k$ 
\cite{Doherty.etal2004}.

\subsection{Optimization over real matrices}

We move on to consider the $\dF=\dR$ case, which is more important in classical 
information theory. One can easily verify that Theorem~\ref{thm:sepOpt} can be 
directly generalized to the $\dF=\dR$ case, if the decomposition in 
Eq.~\eqref{eq:symmPhi} satisfies that 
$\ket{\varphi_i}\bra{\varphi_i}\in\dR^{nk\times nk}$. The obvious way to 
guarantee this is to define the set of separable states over $\dR$. This will, 
however, make the known separability criteria developed in entanglement theory
not directly applicable.

Thus, we employ a different method. We still use the separability cone $\Sep$ 
with respect to the complex numbers, more precisely,
\begin{equation}
  \Phi_{AB}\in\Sep\cap\dR^{nk\times nk},
  \label{eq:SEPR}
\end{equation}
where $\Sep$ is still defined as in Eq.~\eqref{eq:SEP}.  
Equations~(\ref{eq:symmPhi},\,\ref{eq:SEPR}) are not sufficient for 
guaranteeing that $\ket{\varphi_i}\bra{\varphi_i}\in\dR^{nk\times nk}$
\footnote{An explicit counterexample is the (unnormalized) state 
$\Phi_{AB}=\I_{AB}+V_{AB}$.}, however, only a small modification is 
needed.  For pure states one has
$\ket{\varphi_i}\bra{\varphi_i}^T=\ket{\varphi_i^*}\bra{\varphi_i^*}$,
where $(\cdot)^T$ denotes the transpose and $\ket{(\cdot)^*}$ denotes complex 
conjugation with respect to a fixed basis.  Hence, a necessary condition for 
$\ket{\varphi_i}\bra{\varphi_i}\in\dR^{nk\times nk}$ is \begin{equation}
  \Phi_{AB}^{T_A}=\Phi_{AB},
  \label{eq:PTInv}
\end{equation}
where $(\cdot)^{T_A}$ denotes the partial transpose on party $A$. That is, the 
state $\Phi_{AB}$ is invariant under partial transposition.

Interestingly, due to the symmetry and separability of $\Phi_{AB}$, 
Eq.~\eqref{eq:PTInv} is also sufficient for guaranteeing that 
$\ket{\varphi_i}\bra{\varphi_i}\in\dR^{nk\times nk}$; see 
Appendix~\ref{app:sepOpt} for the proof. Hence, we arrive at the following 
theorem for rank-constrained optimization over real matrices.

\begin{theorem}
  For $\dF=\dR$, the rank-constrained SDP in Eq.~\eqref{eq:SDPRank} is 
  equivalent to the conic program
  \begin{equation}
    \begin{aligned}
      &\maxover[\Phi_{AB}]  && \Tr(\tX_A\otimes\I_B\Phi_{AB})\\
      &\subto && \Phi_{AB}\in\Sep,~\Tr(\Phi_{AB})=1,\\
      &       && V_{AB}\Phi_{AB}=\Phi_{AB},~\Phi_{AB}^{T_A}=\Phi_{AB},\\
      &       && \tL_A\otimes\id_B(\Phi_{AB})
		 =Y\otimes\Tr_A(\Phi_{AB}).
    \end{aligned}
    \label{eq:sepOptR}
  \end{equation}
  \label{thm:sepOptR}
\end{theorem}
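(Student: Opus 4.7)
The argument parallels Theorem~\ref{thm:sepOpt}, with the key new step being to restrict the purifications to the real space $\dR^{nk}$. I would prove LHS $\leq$ RHS by starting from any feasible $\rho\in\cF$: since $\rho$ is real, symmetric, and PSD, the real spectral theorem supplies a purification $\ket{\varphi}\in\dR^{n}\otimes\dR^{k}$ with $\Tr_{2}(\ket{\varphi}\bra{\varphi})=\rho$. The doubled state $\Phi_{AB}:=\ket{\varphi}\bra{\varphi}_{A}\otimes\ket{\varphi}\bra{\varphi}_{B}$ is then a pure product, trace one, swap-symmetric, and $T_{A}$-invariant because $\ket{\varphi^{*}}=\ket{\varphi}$, while the marginal condition reduces to $\tL(\ket{\varphi}\bra{\varphi})=\Lambda(\rho)=Y$ and the objective satisfies $\Tr(\tX_{A}\otimes\I_{B}\,\Phi_{AB})=\Tr(X\rho)$. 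Convex combinations over all feasible $\rho$ then give LHS $\leq$ RHS.

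For RHS $\leq$ LHS, take a feasible $\Phi_{AB}$. Its first three constraints plus the marginal condition match the hypotheses of Theorem~\ref{thm:sepOpt}, yielding a decomposition $\Phi_{AB}=\sum_{i}p_{i}\ket{\varphi_{i}}\bra{\varphi_{i}}_{A}\otimes\ket{\varphi_{i}}\bra{\varphi_{i}}_{B}$ with unit $\ket{\varphi_{i}}\in\dC^{nk}$ and $\tL(\ket{\varphi_{i}}\bra{\varphi_{i}})=Y$. I would then exploit $\Phi_{AB}^{T_{A}}=\Phi_{AB}$ to extract a real-vector decomposition. First, observe that $\Phi_{AB}$ is itself a real matrix: from $V_{AB}\Phi_{AB}V_{AB}=\Phi_{AB}$ and $\Phi_{AB}^{T_{A}}=\Phi_{AB}$, together with the identity $V_{AB}\Phi_{AB}^{T_{A}}V_{AB}=(V_{AB}\Phi_{AB}V_{AB})^{T_{B}}$, one gets $\Phi_{AB}^{T_{B}}=\Phi_{AB}$; combined with Hermiticity this forces $\Phi_{AB}^{*}=\Phi_{AB}^{T_{A}T_{B}}=\Phi_{AB}$. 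Conjugating the complex decomposition then produces a second valid one, $\Phi_{AB}=\sum_{i}p_{i}\ket{\varphi_{i}^{*}}\bra{\varphi_{i}^{*}}_{A}\otimes\ket{\varphi_{i}^{*}}\bra{\varphi_{i}^{*}}_{B}$, still with $\tL(\ket{\varphi_{i}^{*}}\bra{\varphi_{i}^{*}})=Y$ since $Y$ is real. Averaging and refining these two decompositions is then expected to yield $\Phi_{AB}=\sum_{j}q_{j}\ket{\psi_{j}}\bra{\psi_{j}}_{A}\otimes\ket{\psi_{j}}\bra{\psi_{j}}_{B}$ with real $\ket{\psi_{j}}$ and $\tL(\ket{\psi_{j}}\bra{\psi_{j}})=Y$. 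Granted this, each $\rho_{j}:=\Tr_{2}(\ket{\psi_{j}}\bra{\psi_{j}})$ is a real, rank-$\leq k$ feasible state, and $\Tr(\tX_{A}\otimes\I_{B}\,\Phi_{AB})=\sum_{j}q_{j}\Tr(X\rho_{j})\leq\max_{\rho\in\cF}\Tr(X\rho)$, closing the inequality.

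The delicate step is the passage from the complex Toth--Guehne-type decomposition to a real one using $T_{A}$-invariance. The footnote already indicates that separability and swap-symmetry alone are insufficient (as the example $\I_{AB}+V_{AB}$ illustrates), so $T_{A}$-invariance must enter essentially. I anticipate the Appendix~\ref{app:sepOpt} proof to put each $\ket{\varphi_{i}}$, after a global phase rotation, into the normal form $\ket{\varphi_{i}}=a_{i}\ket{u_{i}}+ib_{i}\ket{v_{i}}$ with $\ket{u_{i}},\ket{v_{i}}\in\dR^{nk}$ orthogonal, expand the paired term $(\ket{\varphi_{i}}\bra{\varphi_{i}})^{\otimes 2}+(\ket{\varphi_{i}^{*}}\bra{\varphi_{i}^{*}})^{\otimes 2}$ in this real basis, and then invoke the global $T_{A}$-invariance of $\Phi_{AB}$ to cancel the residual antisymmetric pieces between distinct indices $i$ before reassembling the whole sum as a positive combination of real tensor squares $(\ket{\psi_{j}}\bra{\psi_{j}})^{\otimes 2}$ on which the marginal identity $\tL=Y$ is preserved.
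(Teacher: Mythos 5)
Your first direction (real purification, doubled state) and your import of the Theorem~\ref{thm:sepOpt} machinery to obtain $\Phi_{AB}=\sum_i p_i\ket{\varphi_i}\bra{\varphi_i}_A\otimes\ket{\varphi_i}\bra{\varphi_i}_B$ with $\tL(\ket{\varphi_i}\bra{\varphi_i})=Y$ are both correct, and your observation that the constraints force $\Phi_{AB}$ to be a real matrix is also right (though, as the footnote's example shows, reality of the matrix alone is not enough). However, the decisive step --- passing from $T_A$-invariance to reality of each individual $\ket{\varphi_i}\bra{\varphi_i}$ --- is exactly where your argument stops being a proof: ``averaging and refining these two decompositions is then expected to yield'' and ``I anticipate the Appendix proof to \dots cancel the residual antisymmetric pieces between distinct indices $i$'' are conjectures, not arguments, and the cancellation scheme you sketch is not how the claim is (or can easily be) established. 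Averaging a decomposition with its conjugate produces terms like $\tfrac12\bigl(P_{\varphi}^{\otimes 2}+P_{\varphi^*}^{\otimes 2}\bigr)$ with $P_\varphi=\ket{\varphi}\bra{\varphi}$, which for genuinely complex $\ket{\varphi}$ is not a mixture of real symmetric tensor squares, so the hoped-for cross-index reassembly would have to do all the work --- and nothing in your sketch shows it can.

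The paper closes this gap with a one-line support argument that makes any cancellation between distinct indices impossible in the first place. Applying the partial transpose termwise gives
\begin{equation}
  \Phi_{AB}^{T_A}=\sum_i p_i\,\ket{\varphi_i^*}\bra{\varphi_i^*}_A
  \otimes\ket{\varphi_i}\bra{\varphi_i}_B ,
\end{equation}
and the constraint says this equals $\Phi_{AB}$, a state whose range is contained in the symmetric subspace. In any separable decomposition of such a state, every product vector $\ket{a}\otimes\ket{b}$ with nonzero weight must lie in the range, hence in the symmetric subspace; but a product vector is symmetric only if $\ket{a}\propto\ket{b}$. Therefore $\ket{\varphi_i^*}\bra{\varphi_i^*}=\ket{\varphi_i}\bra{\varphi_i}$ \emph{for each $i$ separately}, i.e., $\ket{\varphi_i^*}=\me^{\mi\theta_i}\ket{\varphi_i}$, and after absorbing the phase $\me^{\mi\theta_i/2}$ each $\ket{\varphi_i}$ is a real vector, so $\ket{\varphi_i}\bra{\varphi_i}\in\dR^{nk\times nk}$. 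From there your own conclusion goes through verbatim: $\rho_i=\Tr_2(\ket{\varphi_i}\bra{\varphi_i})$ is real, PSD, rank at most $k$, and feasible, which closes the inequality. So the missing idea is the per-term rigidity coming from the symmetric-subspace support of $\Phi_{AB}^{T_A}=\Phi_{AB}$; with it, your normal-form expansion and the entire cancellation apparatus are unnecessary.
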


Similarly to Theorem~\ref{thm:symExtOpt}, we can also construct a 
complete hierarchy with the multi-party extension method for the 
real case.
\begin{theorem}
  For $\dF=\dR$, let $\xi$ be the solution of the rank-constrained SDP in 
  Eq.~\eqref{eq:SDPRank}. Then, for any $N$, $\xi$ is upper bounded by the 
  solution $\xi_N$ of the SDP hierarchy
  \begin{alignat}{2}
    &\maxover[\Phi_{AB\cdots Z}]  && \Tr(\tX_A\otimes\I_{B\cdots 
    Z}\Phi_{AB\cdots Z})\notag\\
    &~\subto && \Phi_{AB\cdots Z}\ge 0,~\Tr(\Phi_{AB\cdots Z})=1,
    \label{eq:symExtOptR}\\
    &       && P_N^+\Phi_{AB\cdots Z}P_N^+=\Phi_{AB\cdots Z},
    ~\Phi_{AB\cdots Z}^{T_A}=\Phi_{AB\cdots Z},\notag\\
    &       && \tL_A\otimes\id_{B\cdots Z}(\Phi_{AB\cdots Z})
    =Y\otimes\Tr_A(\Phi_{AB\cdots Z}).\notag
  \end{alignat}
  Furthermore, the SDP hierarchy is complete, i.e., $\xi_{N+1}\le\xi_N$ and 
  $\lim_{N\to+\infty}\xi_N=\xi$.
  \label{thm:symExtOptR}
\end{theorem}

We emphasize that all variables involved in Eqs.~\eqref{eq:sepOptR} and 
\eqref{eq:symExtOptR} are taken as real matrices. In addition, due to the 
permutation symmetry induced by $P_N^+\Phi_{AB\cdots Z}P_N^+=\Phi_{AB\cdots 
Z}$, $\Phi_{AB\cdots Z}^{T_A}=\Phi_{AB\cdots Z}$ already ensures the 
partial-transpose-invariance with respect to all bipartitions. This also makes 
the PPT criterion as an additional separability condition redundant for the 
hierarchy in Eq.~\eqref{eq:symExtOptR}.

\subsection{Inherent symmetry for the hierarchy}\label{ssec:globalsym}

Before proceeding further, we briefly describe an inherent symmetry in 
Eqs.~(\ref{eq:sepOpt},\,\ref{eq:symExtOpt},\,%
\ref{eq:sepOptR},\,\ref{eq:symExtOptR}), which is particularly useful for the 
practical implementation. In a convex optimization problem, if a group action 
$G$ does not change the objective function and feasible region, then the 
variables can be assumed to be $G$-invariant.  Specifically, if the SDP, 
$\max_{\Phi\in\cS}\Tr(\Phi X)$, satisfies that $g\cS g^\dagger\subset\cS$ and 
$gXg^\dagger=X$ for all $g\in G$, then we can add an extra $G$-invariant 
constraint that $g\Phi g^\dagger=\Phi$ for all $g\in G$.

For the hierarchy in the complex case in Theorems~\ref{thm:sepOpt} and 
\ref{thm:symExtOpt}, regardless of the actual forms of $X$, $\Lambda$, and $Y$, 
there is an inherent $U^{\otimes N}$ symmetry on 
$\cH_{A_2}\otimes\cH_{B_2}\otimes\cdots\otimes\cH_{Z_2}=(\dC^k)^{\otimes N}$ 
for all $U\in \mathrm{SU}(k)$, i.e., on the $N$ auxiliary Hilbert spaces 
$\cH_2^{\otimes N}$ shown in Fig.~\ref{fig:Ncopy}. Hence, $\Phi_{AB\cdots Z}$ 
can be restricted to those satisfying
\begin{equation}
  (\I_n\otimes U)^{\otimes N}\Phi_{AB\cdots Z}
  (\I_n\otimes U^\dagger)^{\otimes N}
  =\Phi_{AB\cdots Z}.
  \label{eq:symmetry}
\end{equation}
This implies that $\Phi_{AB\cdots Z}$ is generated by the symmetric group 
$S_N$ in $\cH_{A_2}\otimes\cH_{B_2}\otimes\cdots\otimes\cH_{Z_2}=(\dC^k)^{\otimes N}$ 
by the Schur-Weyl duality \cite{Goodman.Wallach2009}.

We take the case $N=2$ as an example to illustrate this point. Under the 
restriction in Eq.~\eqref{eq:symmetry}, $\Phi_{AB}$ admits the form
\begin{equation}
  \Phi_{AB}=\Phi_I\otimes\I_{A_2B_2}+\Phi_V\otimes V_{A_2B_2},
\end{equation}
where $\Phi_I$ and $\Phi_V$ are operators on $\cH_{A_1B_1}$, and $\I_{A_2B_2}$ 
and $V_{A_2B_2}$ are the identity and swap operators on 
$\cH_{A_2}\otimes\cH_{B_2}=\dC^k\otimes\dC^k$, respectively. By taking 
advantage of the relations
\begin{equation}
  \begin{aligned}
    &V_{AB}=V_{A_1B_1}\otimes V_{A_2B_2},~
    V_{A_2B_2}^{T_{A_2}}=k\ket{\phi_k^+}\bra{\phi_k^+},\\
    &\Tr_{A_2}(\I_{A_2B_2})=k\I_{B_2},~
    \Tr_{A_2}(V_{A_2B_2})=\I_{B_2},\\
    &V_{A_2B_2}=P_{A_2B_2}^+-P_{A_2B_2}^-,~
    \I_{A_2B_2}=P_{A_2B_2}^++P_{A_2B_2}^-
  \end{aligned}
\end{equation}
where $\ket{\phi_k^+}=\frac{1}{\sqrt{k}}\sum_{\alpha=1}^k
\ket{\alpha}_{A_2}\ket{\alpha}_{B_2}$ is a maximally entangled state and 
$P_{A_2B_2}^+$ and $P_{A_2B_2}^+$ are projectors onto the symmetric and 
antisymmetric subspaces, respectively, $\xi_2^T$ can be simplified to
\begin{align}
  &\maxover[\Phi_I,\Phi_V]  && 
  \Tr\big[X_{A_1}\otimes\I_{B_1}(k^2\Phi_I+k\Phi_V)\big]\notag\\
  &\subto && \Phi_V=V_{A_1B_1}\Phi_I,~\Phi_I+\Phi_V\ge 0,\notag\\
  &       && \Phi_I-\Phi_V\ge 0,~\Phi_I^{T_{A_1}}
  +k\Phi_V^{T_{A_1}}\ge 0,\label{eq:s2T}\\
  &       && \Phi_I^{T_{A_1}}\ge 0,~k^2\Tr(\Phi_I)+k\Tr(\Phi_V)=1,\notag\\
  &       && \Lambda_{A_1}\otimes\id_{B_1}(k\Phi_I+\Phi_V)
  =Y\otimes\Tr_{A_1}(k\Phi_I+\Phi_V).\notag
\end{align}
A significant improvement in Eq.~\eqref{eq:s2T} is that the dimension of the 
variables is $\dC^{n^2\otimes n^2}$, which no longer depends on the rank $k$.

For the hierarchy in the real case in Theorems~\ref{thm:sepOptR} and 
\ref{thm:symExtOptR}, we consider the symmetry $Q^{\otimes N}$ for $Q\in 
\mathrm{O}(k)$, which would also simplify the structure of $\Phi_{AB\cdots Z}$ 
in $\cH_{A_2}\otimes\cH_{B_2}\otimes\cdots\otimes\cH_{Z_2}=(\dR^k)^{\otimes 
N}$.  The $\mathrm{O}(k)$ symmetry can reduce $\Phi_{AB\cdots Z}$ to the Brauer 
algebra $B_N(k)$ in 
$\cH_{A_2}\otimes\cH_{B_2}\otimes\cdots\otimes\cH_{Z_2}=(\dR^k)^{\otimes N}$ 
\cite{Goodman.Wallach2009}, which is more complicated than the $\mathrm{SU}(k)$ 
symmetry.

For the $N=2$ case, the Brauer algebra $B_2(k)$ is the linear span of 
$\{\I_{A_2B_2},V_{A_2B_2},k\ket{\phi_k^+}\bra{\phi_k^+}\}$, which gives the 
symmetrized $\Phi_{AB}$ of the form
\begin{equation}
  \hspace*{-1ex}
  \Phi_{AB}=\Phi_I\otimes\I_{A_2B_2}+\Phi_V\otimes V_{A_2B_2}+\Phi_\phi\otimes 
  k\ket{\phi_k^+}\bra{\phi_k^+},
  \hspace*{-0.6ex}
\end{equation}
where $\Phi_I$, $\Phi_V$, and $\Phi_\phi$ are operators on $\cH_{A_1B_1}$.
Correspondingly, $\xi_2$ can be simplified to
\begin{alignat}{2}
  & \maxover[\Phi_I,\Phi_V,\Phi_\phi] &&
  \Tr\big[X_{A_1}\otimes\I_{B_1}(k^2\Phi_I+k\Phi_V+k\Phi_\phi)\big]\notag\\
  &~~~\subto&&\Phi_V=V_{A_1B_1}\Phi_I,~\Phi_\phi=\Phi_V^{T_{A_1}},
  ~\Phi_I^{T_{A_1}}=\Phi_I,\notag\\
  &       && V_{A_1B_1}\Phi_\phi=\Phi_\phi,~\Phi_I+\Phi_V\ge 0,\notag\\
  &       && \Phi_I-\Phi_V\ge 0,~\Phi_I+\Phi_V+k\Phi_\phi\ge
	     0,\label{eq:s2Tr}\\
  &       && k^2\Tr(\Phi_I)+k\Tr(\Phi_V)+k\Tr(\Phi_\phi)=1,\notag\\
  &       && \Lambda_{A_1}\otimes\id_{B_1}(k\Phi_I+\Phi_V+\Phi_\phi)\notag\\
  &       && \qquad=Y\otimes\Tr_{A_1}(k\Phi_I+\Phi_V+\Phi_\phi).\notag
\end{alignat}

Curiously, in the SDPs in Eqs.~\eqref{eq:s2T}~and~\eqref{eq:s2Tr}, the rank 
constraint $k$ appears as a parameter that, in principle, can take on 
non-integer values. In Appendix~\ref{app:contrank}, we show that $k$ can 
indeed, in some sense, be considered a continuous rank, and that this is useful 
for handling numerical errors.

\section{Examples}

In this section, we show that our method can be widely used in quantum and 
classical information theory. As illustrations, we investigate the examples of 
the optimization over pure states and unitary channels, the characterization of 
faithful entanglement, and quantum contextuality as problems in quantum 
information theory. Concerning classical information theory, we study as 
examples the Max-Cut problem, pseudo-Boolean optimization, and the minimum 
dimension of the orthonormal representation of graphs.

\subsection{Optimization over pure quantum states and unitary channels}

A direct application of our method in quantum information theory is the optimization 
over pure states. For example, we consider the optimization problem from 
incomplete information
\begin{equation}
  \begin{aligned}
    &\maxover[\ket{\varphi}]  && \bra{\varphi}X\ket{\varphi}\\
    &\subto && \bra{\varphi}M_i\ket{\varphi}=m_i,
  \end{aligned}
  \label{eq:SDPPure}
\end{equation}
where the $M_i$ are the performed measurements and the $m_i$ are the corresponding 
measurement results. This can be viewed as a refined problem of compressed 
sensing tomography \cite{Gross.etal2010}, in which the feasibility problem is 
considered. The optimization in Eq.~\eqref{eq:SDPPure} is obviously 
a rank-constrained SDP,
\begin{equation}
  \begin{aligned}
    &\maxover[\rho]  && \Tr(X\rho)\\
    &\subto && \Tr(M_i\rho)=m_i,\,\Tr(\rho)=1,\\
    &       && \rho\ge0,~\rank(\rho)=1.
  \end{aligned}
  \label{eq:SDPPureRank}
\end{equation}
Thus, Theorem~\ref{thm:sepOpt} gives the equivalent conic program
\begin{alignat}{2}
    \notag
    &\maxover[\Phi_{AB}]~~&& \Tr(X_A\otimes\I_B\Phi_{AB})\\
    \label{eq:sepOptPure}
    &\subto && \Phi_{AB}\in\Sep,\,\Tr(\Phi_{AB})=1,\,
	       V_{AB}\Phi_{AB}=\Phi_{AB},\\
    \notag
    &       && \Tr_A(M_i\otimes\I_B\Phi_{AB})=m_i\Tr_A(\Phi_{AB}),
\end{alignat}
from which a complete SDP hierarchy can be constructed using 
Theorem~\ref{thm:symExtOpt}. Similarly, we can also consider the optimization 
over low-rank quantum states.

Because of the Choi-Jamio{\l}kowski duality \cite{Watrous2018}, the results in 
Eqs.~(\ref{eq:SDPPureRank}, \ref{eq:sepOptPure}) can also be used for the 
optimization over unitary (and low-Kraus-rank) channels.  As an example, we 
show that our method provides a complete characterization of the mixed-unitary 
channels, which was recently proved to be an NP-hard problem 
\cite{Lee.Watrous2020}.

A channel $\Lambda$ is called mixed-unitary if there exists a positive integer 
$m$, a probability distribution $(p_1,p_2,\dots,p_m)$, and unitary operators 
$U_1,U_2,\dots,U_m$ such that
\begin{equation}
  \Lambda(\rho)=\sum_{i=1}^m p_i U_i\rho U_i^\dagger.
  \label{eq:mixedUnitary}
\end{equation}
According to the Choi-Jamio{\l}kowski duality, a channel is mixed-unitary if and 
only if the corresponding Choi state defined as
\begin{equation}
  J(\Lambda)=\id_1\otimes\Lambda_2(\ket{\phi^+}\bra{\phi^+}),
  \label{eq:Choi}
\end{equation}
with $\ket{\phi^+}=\frac{1}{\sqrt{n}}\sum_{\alpha=1}^n\ket{\alpha\alpha}$, is 
a mixture of maximally entangled states, i.e.,
$J(\Lambda)=\sum_{i=1}^m p_i\ket{\phi_i}\bra{\phi_i}$,
where the $\ket{\phi_i}$ are maximally entangled states, i.e.,
$\Tr_1(\ket{\phi_i}\bra{\phi_i})=\I_n/n$.
Thus, characterizing the mixed-unitary channels is equivalent to characterizing 
the mixture of maximally entangled states,
\begin{equation}
  \cM=\conv\left\{\ket{\phi}\bra{\phi}\,\Big|\,
  \Tr_1(\ket{\phi}\bra{\phi})
  =\frac{\I_n}{n}\right\}.
  \label{eq:maxEntMixed}
\end{equation}
According to Eq.~\eqref{eq:S22S}, $\Lambda$ is mixed-unitary, i.e.,  
$J(\Lambda)\in\cM$, is equivalent to the feasibility problem
\begin{align}
  \notag
  &\findover  && \Phi_{AB}\in\Sep\\
  \label{eq:sepFeasMixedUnitary}
  &\subto && \Tr_B(\Phi_{AB})=J(\Lambda),~V_{AB}\Phi_{AB}=\Phi_{AB},\\
  \notag
  &       && \Tr_{A_1}\otimes\id_{A_2}\otimes\id_B(\Phi_{AB})
  =\frac{\I_n}{n}\otimes\Tr_A(\Phi_{AB}),\\
  \notag
  &       && \id_{A_1}\otimes\Tr_{A_2}\otimes\id_B(\Phi_{AB})
  =\frac{\I_n}{n}\otimes\Tr_A(\Phi_{AB}),
\end{align}
where the last constraint follows from $\Tr_2(\ket{\phi}\bra{\phi})=\I_n/n$ 
according to Eq.~\eqref{eq:maxEntMixed}. This constraint is redundant for 
Eq.~\eqref{eq:sepFeasMixedUnitary}, but it may help when the SDP relaxations 
are considered.

A further application comes from entanglement theory. Following 
Ref.~\cite{Guehne.etal2021}, the optimization over $\cM$ also provides 
a complete characterization of faithful entanglement 
\cite{Weilenmann.etal2020}, i.e., the entangled states that are detectable by 
fidelity-based witnesses. In Ref.~\cite{Guehne.etal2021}, the authors prove 
that a state $\rho\in\dC^n\otimes\dC^n$ is faithful if and only if 
$\xi:=\max_{\sigma\in\cM}\Tr(\sigma\rho)>1/n$.  According to 
Theorem~\ref{thm:sepOpt}, the solution $\xi$ also equals the conic program
\begin{align}
  \notag
  &\maxover  && \Tr(\rho_A\otimes\I_B\Phi_{AB})\\
  \label{eq:unfaithful}
  &\subto && \Phi_{AB}\in\Sep,~V_{AB}\Phi_{AB}=\Phi_{AB},\\
  \notag
  &       && \Tr_{A_1}\otimes\id_{A_2}\otimes\id_B(\Phi_{AB})
  =\frac{\I_n}{n}\otimes\Tr_A(\Phi_{AB}),\\
  \notag
  &       && \id_{A_1}\otimes\Tr_{A_2}\otimes\id_B(\Phi_{AB})
  =\frac{\I_n}{n}\otimes\Tr_A(\Phi_{AB}),
\end{align}
where $\cH_A=\cH_B=\dC^n\otimes\dC^n$. By taking advantage of the complete 
hierarchy, if for some $N$ there is $\xi_N\le 1/n$ or $\xi_N^T\le 1/n$, then 
$\rho$ is unfaithful. In practice, it is already enough to take $\xi_2^T$ for 
verifying the unfaithfulness of some states that are not detectable by any of
the known methods \cite{Guehne.etal2021,Weilenmann.etal2020}. An explicit 
example for $n=4$ is
\begin{equation}
  \rho=\frac{p}{16}\I_4\otimes\I_4+
  \frac{1-p}{2}\ket{x}\bra{x}+\frac{1-p}{2}\ket{y}\bra{y},
\end{equation}
where
\begin{equation}
  \begin{aligned}
    \ket{x}&=\frac{1}{\sqrt{10}}\sum_{\alpha=1}^4
    \sqrt{\alpha}\ket{\alpha\alpha},\\
    \ket{y}&=\frac{1}{\sqrt{10}}\sum_{\alpha=1}^4
    \beta_4^\alpha\sqrt{5-\alpha}\ket{\alpha\alpha},
  \end{aligned}
\end{equation}
with $p=23/40$ and $\beta_4=(1+\mi)/\sqrt{2}$. For this state, the SDP 
relaxation of Eq.~\eqref{eq:unfaithful} gives the upper bound 
$\xi^T_2=0.24888<1/4$, which matches with the lower bound from gradient search 
and is strictly better than the best known upper bound $\xi_1=0.25063>1/4$ from 
Ref.~\cite{Guehne.etal2021}.

\subsection{Gram matrix and orthonormal representation}

Let $\ket{a_i}\in \dF^k$ ($\dF=\dC$ or $\dF=\dR$) for $i=1,2,\dots,n$ be 
a sequence of vectors, then the Gram matrix defined as 
$\Gamma=[\braket{a_i}{a_j}]_{i,j=1}^n$ satisfies $\Gamma\ge 0$ and 
$\rank(\Gamma)\le k$. The converse is also true in the sense that if 
an $n\times n$ matrix in $\dF^{n\times n}$ satisfies $\Gamma\ge 0$ 
and $\rank(\Gamma)\le k$, then there exist $\ket{a_i}\in \dF^k$ for $i=1,2,\dots,n$, 
such that 
$\Gamma_{ij}=\braket{a_i}{a_j}$ \cite{Lovasz2019}.  This correspondence can 
trigger many applications of the rank-constrained optimization.  For example, 
it can be used to bound the minimum dimension of the orthonormal representation 
of graphs.

In graph theory, a graph is a pictorial representation of a set of objects 
(vertices) where some pairs of objects are connected by links (edges).  
Formally, a graph $G$ is denoted by a pair $(V,E)$, where $V$ is the set of 
vertices, and $E$ is the set of edges that are paired vertices. For a graph 
$G=(V,E)$, an orthonormal representation is a set of normalized vectors 
$\big\{\ket{a_i}\in \dF^k\bmid i\in V\big\}$, such that $\braket{a_i}{a_j}=0$ 
if $\{i,j\}\notin E$ \cite{Lovasz2019}. The minimum dimension problem is to 
find the smallest number $k$ such that an orthonormal representation exists.  
This is not only an important quantity in classical information theory 
\cite{Lovasz2019}, but also widely used in quantum information theory. For 
example, it is a crucial quantity in quantum contextuality theory 
\cite{Cabello.etal2014,Ramanathan.Horodecki2014}, and can be directly used for 
contextuality-based dimension witness \cite{Ray.etal2021}.  Note that in 
quantum contextuality, the definition of orthonormal representations is 
slightly different, where the adjacent instead of the nonadjacent vertices are 
required to be orthogonal to each other, i.e., $\braket{a_i}{a_j}=0$ if 
$\{i,j\}\in E$.  In the following, we use the standard definition in graph 
theory.  All results can be trivially adapted to the alternative definition by 
considering the complement graph.

\begin{figure}
  \centering
  \includegraphics[width=.3\textwidth]{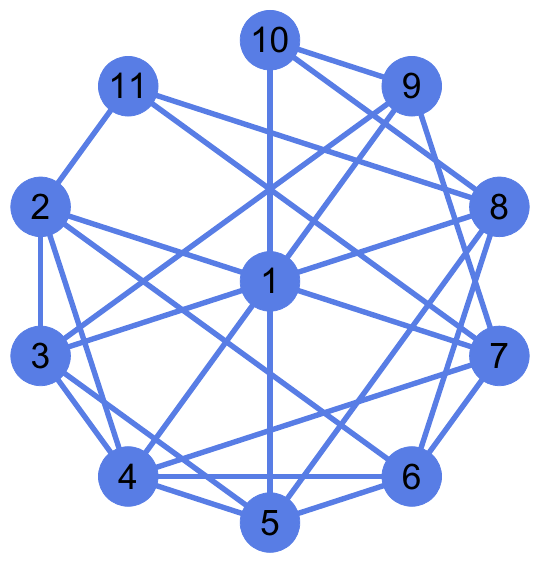}
  \caption{%
    For this $11$-vertex graph, one obtains that $\vartheta(G)=4$ \cite{Note2} 
    and hence, a lower bound of $4$ for the minimal dimension.  In contrast, 
    our PPT relaxation of Eq.~\eqref{eq:MINOR} can already exclude both real 
    and complex orthonormal representations in dimension $4$.
  }
  \label{fig:G11}
\end{figure}
\footnotetext{The numerical error can be shown to be smaller than $10^{-100}$ 
using the standard primal and dual problem of the Lov\'asz 
$\vartheta$-function's SDP characterization \cite{Lovasz1979} and the 
arbitrary-precision SDP solver SDPA-GMP \cite{Nakata2010}}

By taking advantage of the Gram matrix, the problem of the minimum dimension of 
the orthonormal representation \cite{Lovasz2019} can be expressed as
\begin{equation}
  \begin{aligned}
    &\minover[\Gamma]  && k\\
    &\subto && \Delta(\Gamma)=\I_n,~\Gamma_{ij}=0~\FA \{i,j\}\notin E,\\
    &       && \Gamma\ge0,~\rank(\Gamma)\le k,
  \end{aligned}
  \label{eq:OR}
\end{equation}
where $G=(V,E)$ is a graph with $\abs{V}=n$ vertices, $E$ is the set of edges, 
$\Delta(\cdot)$ denotes the map of eliminating all off-diagonal elements of 
a matrix (completely dephasing map), and $\Gamma\in\dR^{n\times n}$ or 
$\Gamma\in\dC^{n\times n}$ corresponds to the real or complex representation.  
Let $W$ be the adjacency matrix of $G$, i.e., $W_{ij}=1$ if $\{i,j\}\in E$ and 
$W_{ij}=0$ otherwise, then the first two constraints in Eq.~\eqref{eq:OR} can 
also be written as $(1-W_{ij})\Gamma_{ij}=\delta_{ij}$, i.e.,
\begin{equation}
  \Lambda(\Gamma):=(\E_n-W)\odot\Gamma=\I_n,
\end{equation}
where $\E_n$ is the $n\times n$ matrix with all elements being one and $[X\odot 
Y]_{ij}=X_{ij}Y_{ij}$ is the Hadamard product of matrices.  Then, the existence 
of a $k$-dimensional orthonormal representation is equivalent to the 
feasibility problem
\begin{align}
  \notag
  &\findover  && \Phi_{AB}\\
  \label{eq:MINOR}
  &\subto && \Phi_{AB}\in\Sep,~\Tr(\Phi_{AB})=n,\\
  \notag
  &       && V_{AB}\Phi_{AB}=\Phi_{AB},
  ~\left(\Phi_{AB}^{T_A}=\Phi_{AB}\right),\\
  \notag
  &       && \tL_A\otimes\id_B(\Phi_{AB})
  =\frac{1}{n}\I_n\otimes\Tr_A(\Phi_{AB}),
\end{align}
where $\cH_A=\cH_B=\cH_1\otimes\cH_2=\dC^n\otimes\dC^k~(\dR^n\otimes\dR^k)$, 
$\tL(\cdot)=\Lambda[\Tr_2(\cdot)]$, and the extra constraint 
$\Phi_{AB}^{T_A}=\Phi_{AB}$ is for the case that $\dF=\dR$ only.
Note that the inherent symmetry presented in Sec.~\ref{ssec:globalsym} can be 
used for simplifying the SDP relaxations.

The Lov\'asz $\vartheta$-function defined by
\begin{equation}
  \vartheta(G) = \min_{\{\ket{a_i}\}_{i \in V},\ket{c}} \max_{i \in V} 
  \frac{1}{|\braket{c}{a_i}|^2},
  \label{eq:Lovasz}
\end{equation}
where the $\ket{a_i}$ form an orthonormal representation and $\ket{c}$ is 
a unit vector, is probably the best-known way to obtain a lower bound on the 
minimal dimension of orthonormal representations. We note that the value of the 
Lov\'asz $\vartheta$-function is independent of whether the orthonormal 
representation is real or complex \cite{Lovasz2019}. For any $k$-dimensional 
orthonormal representation $\ket{a_i}$, ${\ket{a_i}\otimes\ket{a_i^*}}$ also 
form an orthonormal representation and, with $\ket{c} = \frac{1}{\sqrt{k}} 
\sum_{\alpha=1}^k\ket{\alpha}\otimes\ket{\alpha}$, the bound $k\ge\vartheta(G)$ 
is readily obtained from Eq.~\eqref{eq:Lovasz}. Our method can provide a better 
bound even for small graphs; see Fig.~\ref{fig:G11}.

\subsection{Max-Cut problem}

\begin{figure}[t]
  \centering
  \includegraphics[width=.3\textwidth]{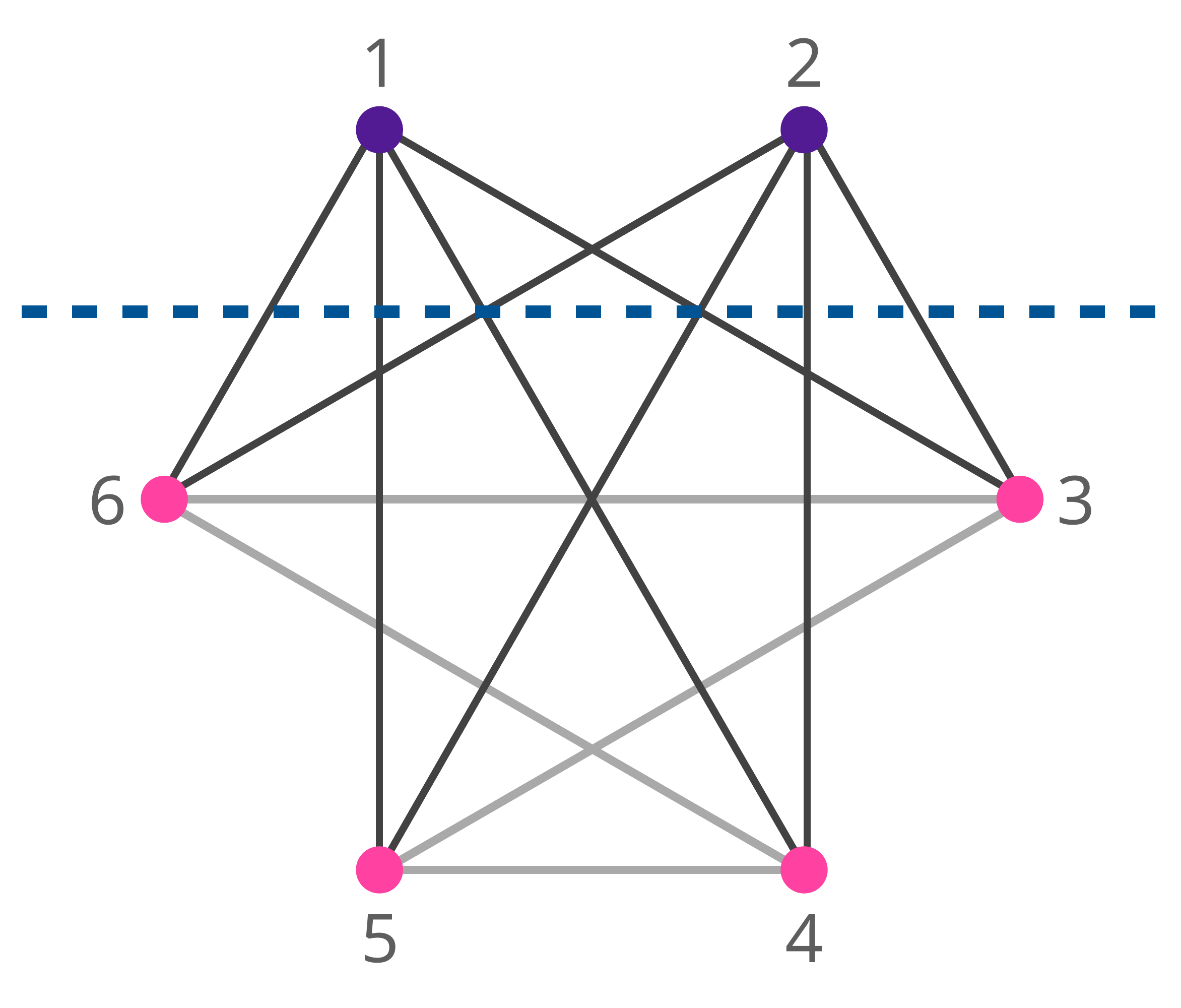}
  \caption{The Max-Cut problem for a graph is to find a cut, i.e., 
    a bipartition, such that the number of edges that cross the cut is 
    maximized. For the graph shown in the figure, the Max-Cut is $8$ (achieved 
    by the cut $1,2$ versus $3,4,5,6$), which matches our SDP relaxation 
  $\xi_2=8$, while the usual Goemans-Williamson method gives only the upper bound 
$\xi_1=9$.}
  \label{fig:maxcut}
\end{figure}

The Max-Cut problem is among the best-known rank-constrained optimization 
problems \cite{Goemans.Williamson1995} and also draws a lot of interest in 
quantum computing \cite{Hamerly.etal2019,Crooks2018}. Given a graph $G=(V,E)$, 
the Max-Cut problem is to find a cut, i.e., a bipartition of the vertices 
$(S,S^c)$, where $S^c=V\setminus S$, that maximizes the number of edges between 
$S$ and $S^c$; see Fig.~\ref{fig:maxcut}. A significant breakthrough for the 
Max-Cut problem was the work by Goemans  and Williamson 
\cite{Goemans.Williamson1995}, in which they showed that the Max-Cut problem 
can be written as the rank-constrained optimization
\begin{equation}
  \begin{aligned}
    &\maxover[\rho]  && \frac{1}{4}\Tr[W(\E_n-\rho)]\\
    &\subto && \Delta(\rho)=\I_n,\,\rho\ge0,\,\rank(\rho)=1,
  \end{aligned}
  \label{eq:MAXCUT}
\end{equation}
where $n=\abs{V}$, $\rho\in\dR^{n\times n}$, $\E_n$ is the $n\times n$ matrix 
with all elements being one, and $W$ is the adjacency matrix of $G$. To see why 
the Max-Cut problem is equivalent to Eq.~\eqref{eq:MAXCUT}, we denote a cut 
with the binary vector $\vect{x}\in\{-1,1\}^n$ such that $x_i=1$ if $i\in S$ 
and $x_i=-1$ if $i\in S^c$ and let $\rho=\vect{x}\vect{x}^T$, then the number 
of edges between $S$ and $S^c$ is $\frac{1}{4}\sum_{(i,j)\in E}(1-x_ix_j)$, 
which is equal to the objective function in Eq.~\eqref{eq:MAXCUT}. Furthermore, 
the set of all cuts $\rho=\vect{x}\vect{x}^T$ can be fully characterized by the 
constraints in Eq.~\eqref{eq:MAXCUT}.
The idea of the Goemans-Williamson approximation is to remove the rank constraint 
in Eq.~\eqref{eq:MAXCUT} and solve the resulting SDP relaxation, which gives an 
upper bound $\xi_1$ for the Max-Cut problem.

In the following, we show how our method can give a better estimate than the 
Goemans-Williamson approximation. By noting that we can add a redundant 
constraint $\Tr(\rho)=n$, Theorem~\ref{thm:sepOptR} implies that the Max-Cut 
problem is equivalent to the conic program
\begin{equation}
  \begin{aligned}
    &\maxover[\Phi_{AB}]  && 
    \frac{1}{4}\Tr[W\E_n]-\frac{1}{4}\Tr[W_A\otimes\I_B\Phi_{AB}]\\
    &\subto && \Phi_{AB}\in\Sep,~\Tr(\Phi_{AB})=n,\\
    &       && V_{AB}\Phi_{AB}=\Phi_{AB},~\Phi_{AB}^{T_A}=\Phi_{AB},\\
    &       && \Delta_A\otimes\id_B(\Phi_{AB})
	       =\frac{1}{n}\I_n\otimes\Tr_A(\Phi_{AB}),
  \end{aligned}
  \label{eq:MAXCUTSEP}
\end{equation}
where $\cH_A=\cH_B=\dR^n$. Correspondingly, a complete hierarchy of SDPs can be 
constructed from Theorem~\ref{thm:symExtOptR}.

We have tested the SDP relaxation $\xi_2$ (replacing $\Phi_{AB}\in\Sep$ with 
$\Phi_{AB}\ge 0$) with some random graphs (randomly generated adjacency 
matrices). Let us discuss the largest two graphs that we have tested; the 
details for these two graphs are given in the Supplemental Material. For 
a $64$-vertex graph with $419$ edges, the Goemans-Williamson method gives the 
upper bound $\lfloor\xi_1\rfloor=299$; instead $\xi_2=287$.  For the 
$72$-vertex graph with $475$ edges, the Goemans-Williamson method gives the 
upper bound $\lfloor\xi_1\rfloor=335$; instead $\xi_2=321$.  Furthermore, the 
optimal $\Phi_{AB}$ also shows that the upper bound $\xi_2$ in these two cases 
are achievable.  Hence, $\xi_2$ gives exactly the solution to the Max-Cut 
problem in these examples
\footnote{To obtain $\xi_2$, we used the SCS solver (unparallel version) on 
  a single cluster node.  The solver took $1.33\times 10^5$ and $3.67\times 
  10^5$ seconds for solving the corresponding SDPs of the $64$-vertex and 
  $72$-vertex graphs, respectively.
}.
Actually, for all the graphs that we have tested, $\xi_2$ already gives the 
exact solution of the Max-Cut problem.  Finally, we would like to mention that, 
although our method gives a much better bound, it is more costly than the 
Goemans-Williamson method. For example, the size of the matrix grows 
quadratically on the number of vertices for $\xi_2$, compared to only growing 
linearly for the Goemans-Williamson method.

\subsection{Pseudo-Boolean optimization}

Similar to the Max-Cut problem, we can apply the method to general 
optimization of a real-valued function over Boolean variables. These so-called 
pseudo-Boolean optimization problems  find wide applications in, for 
example, statistical mechanics, computer science, discrete mathematics, and 
economics (see Ref.~\cite{Boros.Hammer2002} and references therein). As 
a demonstration, we consider the quadratic pseudo-Boolean optimization
\begin{equation}
  \begin{aligned}
    &\maxover[\vect{x}]  && \vect{x}^TQ\vect{x}+\vect{c}^T\vect{x}\\
    &\subto && x_i=\pm 1,
  \end{aligned}
  \label{eq:QSBO}
\end{equation}
where $Q\in\dR^{(n-1)\times(n-1)}$, $\vect{c}\in\dR^{n-1}$, and 
$\vect{x}^T=[x_1,x_2,\dots,x_{n-1}]$; higher-order cases can be obtained by 
reducing to quadratic forms \cite{Boros.Hammer2002} or applying the results in 
Sec.~\ref{ssec:quadratic}. Notably, performing quadratic pseudo-Boolean 
optimization problems with noisy intermediate-scale quantum computers has drawn 
a lot of research interest 
\cite{Boixo.etal2014,Lucas2014,Farhi.etal2014,Farhi.Harrow2019}.  So, the 
following method may be used for characterizing benchmarks of such devices.

The quadratic pseudo-Boolean optimization problem can also be written as 
a rank-constrained optimization.  The basic idea is to write $\rho$ as an 
$n\times n$ matrix
\begin{equation}
  \rho=
  \begin{bmatrix}
    \vect{x}\vect{x}^T & \vect{x}\\
    \vect{x}^T & 1
  \end{bmatrix}
  =
  \begin{bmatrix}
    \vect{x}\\ 1
  \end{bmatrix}
  \begin{bmatrix}
    \vect{x}^T & 1
  \end{bmatrix}.
\end{equation}
Furthermore, we define $L$ as
\begin{equation}
  L=
  \begin{bmatrix}
    Q & \frac{1}{2}\vect{c}\\
    \frac{1}{2}\vect{c}^T & 0
  \end{bmatrix}.
\end{equation}
Then the optimization problem in Eq.~\eqref{eq:QSBO} can be written as the 
rank-constrained SDP
\begin{equation}
  \begin{aligned}
    &\maxover[\rho]  && \Tr(L\rho)\\
    &\subto && \Delta(\rho)=\I_n,~\rho\ge0,~\rank(\rho)=1,
  \end{aligned}
  \label{eq:LeastSquareSDP}
\end{equation}
which is of a similar form as in Eq.~\eqref{eq:MAXCUT}. By 
Theorem~\ref{thm:sepOptR}, the quadratic pseudo-Boolean optimization problem is 
equivalent to the conic program in Eq.~\eqref{eq:MAXCUTSEP} with the objective 
function replaced by $\Tr(L_A\otimes\I_B\Phi_{AB})$.

To illustrate the performance of our method, we consider the Boolean least 
squares optimization, i.e.,
\begin{equation}
  \begin{aligned}
    &\minover[\vect{x}]  && \norm{A\vect{x}-\vect{b}}_2^2\\
    &\subto && x_i=\pm 1.
  \end{aligned}
  \label{eq:BLS}
\end{equation}
We have tested our SDP relaxation $\xi_2$, compared to the widely-used SDP 
relaxation $\xi_1$ \cite{Nesterov1998}, for $1000$ random matrices 
$A\in\dR^{40\times30}$ and vectors $\vect{b}\in\dR^{40}$ with elements 
independently normally distributed. For this size, the optimal value $\xi$ can 
still be obtained by brute force. In most cases, the optimum is reached by 
$\xi_2$ while there is a significant gap between the optimal value $\xi$ and 
$\xi_1$. More precisely, for the $1000$ random samples, the average ratio 
$\mean{\xi_2/\xi}=99.93\%$, in contrast to $\mean{\xi_1/\xi}=49.32\%$. Note 
that as the minimization is considered in Eq.~\eqref{eq:BLS}, $\xi_N$ provide 
lower bounds for $\xi$ instead of upper bounds.

In passing, we note that the sum-of-square (SOS) hierarchy 
\cite{Lasserre2001,Parrilo2000} can also be used for some examples in this 
paper, such as the pseudo-Boolean optimization 
\cite{Lovasz.Schrijver1991,Fleming.etal2019,Erdogdu.etal2017,Bandeira.Kunisky2019}.  
However, our method in its generality is not subsumed by the SOS hierarchy.

On the one hand, there are two relatively separated steps in our approach to 
the rank-constrained optimization. The first step consists of mapping 
a rank-constrained problem to an entanglement problem. It is this mapping that 
allows the symmetric extension to be applied in the second step. The 
implication of this mapping is actually broader: if new optimization algorithms 
over separable states are proposed, either classical algorithms or quantum 
algorithms, they can be directly used for rank-constrained optimization 
problems. In fact, exploring the full application of entanglement theory with 
many methods other than the symmetric extension, such as other widely used 
method for entanglement witness, is still yet to be exploited. On the other 
hand, our approach also points towards a formulation of an optimization problem 
in a more physical language. In the SOS hierarchy, the variables are treated as 
individual scalars. Instead, in our method they are treated as a single vector 
in a Hilbert space. This is important from a physicist's point of view, because 
this treatment makes it easier for a physicist to study the global properties 
of the physical system. One example is the utilization of not only the discrete 
symmetries \cite{Gatermann.Parrilo2004} but also the continuous symmetries 
\cite{Yu.etal2021}.

\section{More general results on rank-constrained 
optimization}\label{sec:general}

In this section, we consider extensions of the problem in
Eq.~(\ref{eq:SDPRank}) and general cases of rank-constrained 
optimization. For simplicity, we only consider the optimization over complex 
matrices. All results can be similarly applied to the optimization 
over real matrices by adding the partial-transpose-invariance 
constraint $\Phi_{AB}^{T_A}=\Phi_{AB}$ or $\Phi_{AB\cdots 
Z}^{T_A}=\Phi_{AB\cdots Z}$.

\subsection{Inequality constraints}\label{ssec:ineqcons}

We start from the rank-constrained SDP with inequality constraints
\begin{equation}
  \begin{aligned}
    &\maxover[\rho]  && \Tr(X\rho)\\
    &\subto && \Lambda(\rho)\le Y,~\Tr(\rho)=1,\\
    &       && \rho\ge0,~\rank(\rho)\le k,
  \end{aligned}
  \label{eq:SDPRankIneq}
\end{equation}
where $\Lambda$ is a Hermiticity-preserving map \cite{Watrous2018}. Similar to 
Eqs.~(\ref{eq:feasibleRegion},\,\ref{eq:defP}), we can still define the 
feasible region $\cF$ in $\dC^{n\times n}$ and its purification $\cP$ in 
$\dC^{nk\times nk}$ as
\begin{align}
  \hspace*{-1ex}
  \cF&=\big\{\rho\bmid\Lambda(\rho)\le Y,\Tr(\rho)=1,
  \rho\ge0,\rank(\rho)\le k\big\},
  \hspace*{-.6ex}\\
  \hspace*{-1ex}
  \cP&=\big\{\ket{\varphi}\bra{\varphi}\bmid
    \tL(\ket{\varphi}\bra{\varphi})\le
  Y,~\braket{\varphi}{\varphi}=1\big\},
  \label{eq:PIneq}
\end{align}
where $\tL(\cdot)=\Lambda[\Tr_2(\cdot)]$. Again, we denote the solution of 
Eq.~\eqref{eq:SDPRankIneq} as $\xi$. In this case, the proof of Theorem~\ref{thm:sepOpt} 
does not work, because although the constraints
\begin{equation}
  \begin{aligned}
    &\Phi_{AB}\in\Sep,~\Tr(\Phi_{AB})=1,~V_{AB}\Phi_{AB}=\Phi_{AB},\\
    &\tL\otimes\id_B(\Phi_{AB})\le Y\otimes\Tr_A(\Phi_{AB})
  \end{aligned}
  \label{eq:consIneq}
\end{equation}
still provide a necessary condition for $\Tr_B(\Phi_{AB})\in\cS:=\conv(\cP)$, 
they are no longer sufficient. This is because, contrary to the equality case, 
the pure states in the decomposition of $\Phi_{AB}$ can no longer be guaranteed 
to be in $\cP$ for the inequality case, and hence the proof from 
Appendix~\ref{app:sepOpt} does not work in this case. However, the complete 
hierarchy analogously to Eq.~\eqref{eq:symExtOpt} still provides the exact 
solution $\xi$.
\begin{theorem}
  For $\dF=\dC$, let $\xi$ be the solution of the rank-constrained SDP in 
  Eq.~\eqref{eq:SDPRankIneq}. Then, for any $N$, $\xi$ is upper bounded by the 
  solution $\xi_N$ of the SDP hierarchy
  \begin{alignat}{2}
    &\maxover[\Phi_{AB\cdots Z}]  && \Tr(\tX_A\otimes\I_{B\cdots Z}\Phi_{AB\cdots 
    Z})\notag\\
    &~\subto && \Phi_{AB\cdots Z}\ge 0,~\Tr(\Phi_{AB\cdots Z})=1,
    \label{eq:symExtOptIneq}\\
    &       && P_N^+\Phi_{AB\cdots Z}P_N^+=\Phi_{AB\cdots Z},
    \notag\\
    &       && \tL\otimes\id_{B\cdots Z}(\Phi_{AB\cdots Z}) \le 
    Y\otimes\Tr_A(\Phi_{AB\cdots Z}).\notag
  \end{alignat}
   Furthermore, the SDP hierarchy is complete, 
  i.e., $\xi_{N+1}\le\xi_N$ and $\lim_{N\to+\infty}\xi_N=\xi$.
  \label{thm:sepOptIneq}
\end{theorem}

Similarly, any criterion for the full separability of $\Phi_{AB\cdots Z}$ or 
the unnormalized state $Y\otimes\Tr_A(\Phi_{AB\cdots Z})-\tL\otimes\id_{B\cdots 
Z}(\Phi_{AB\cdots Z})$, such as the PPT criterion, can be added to the 
optimization in Eq.~\eqref{eq:symExtOptIneq}, which can give a better upper 
bound for the optimization in Eq.~\eqref{eq:SDPRankIneq}.

For simplicity, we only present the intuition of the proof of 
Theorem~\ref{thm:sepOptIneq} here; see Appendix~\ref{app:ineqcons} for 
a rigorous proof.
The property $\xi_{N+1}\le\xi_N$ follows from the hierarchical  property that 
if $\Phi_{AB\cdots ZZ'}$ is within the feasible region of level $N+1$, then 
$\Phi_{AB\cdots Z}=\Tr_{Z'}(\Phi_{AB\cdots ZZ'})$ is within the feasible region 
of level $N$.

For the convergence property, we consider a separable variant of the 
optimization in Eq.~\eqref{eq:symExtOptIneq} by replacing $\Phi_{AB\cdots Z}\ge 
0$ with $\Phi_{AB\cdots Z}\in\Sep$, and denote the corresponding solutions as 
$\widetilde\xi_N$, i.e., add a tilde to distinguish the solution with the 
separability constraint from the original $\xi_N$. Then, the quantum de Finetti 
theorem \cite{Christandl.etal2007,Caves.etal2002} implies that
\begin{equation}
  \lim_{N\to+\infty}\widetilde\xi_N=\lim_{N\to+\infty}\xi_N.
  \label{eq:tildexi}
\end{equation}
Now, we assume that the $\widetilde\xi_N$ are achieved by the separable states
\begin{equation}
  \widetilde\Phi_{AB\cdots Z}=\int_\psi f_N(\psi)
  \ket{\psi}\bra{\psi}^{\otimes N}\md\psi,
  \label{eq:symmPhiFin}
\end{equation}
where the $f_N(\psi)\md\psi$ are $N$-dependent probability distributions, and 
$\md\psi$ denotes the normalized uniform distribution. As the set of 
probability distributions on a compact set is also compact in the weak topology 
\cite{Bogachev2007}, we can take $f_\infty(\psi)\md\psi$ as a limit point of 
$f_N(\psi)\md\psi$. Thus, we get an $N$-independent probability distribution 
$f_\infty(\psi)\md\psi$. Let
\begin{equation}
  \widetilde\Phi^\infty_{AB\cdots Z}=\int_\psi f_\infty(\psi)
  \ket{\psi}\bra{\psi}^{\otimes N}\md\psi,
  \label{eq:symmPhin}
\end{equation}
which satisfies all the constraints in Eq.~\eqref{eq:symExtOptIneq} for 
arbitrary $N$ by the hierarchical property, and moreover
\begin{equation}
  \lim_{N\to+\infty}\widetilde\xi_N
  =\lim_{N\to+\infty}\Tr\big(\tX_A\widetilde\Phi_A^N\big)
  =\Tr\big(\tX_A\widetilde\Phi_A^\infty\big),
  \label{eq:deFinettiConvegence}
\end{equation}
where
\begin{align}
  \widetilde\Phi_A^N&=\Tr_{B\cdots Z}\big(\widetilde\Phi_{AB\cdots Z}\big)
  =\int_\psi f_N(\psi)\ket{\psi}\bra{\psi}\md\psi,\\
  \widetilde\Phi_A^\infty&=\Tr_{B\cdots Z}
  \big(\widetilde\Phi^\infty_{AB\cdots Z}\big)
  =\int_\psi f_\infty(\psi)\ket{\psi}\bra{\psi}\md\psi.
\end{align}
By Eq.~\eqref{eq:tildexi}, to prove that $\lim_{N\to+\infty}\xi_N=\xi$, we only 
need to show that $\widetilde\Phi_A^\infty\in\conv(\cP)$.  To this end, it is 
sufficient to show that $Y_\varphi:=\tL(\ket{\varphi}\bra{\varphi})\le Y$ 
whenever $f_\infty(\varphi)\ne 0$.  By substituting Eq.~\eqref{eq:symmPhin} 
into the last constraint in Eq.~\eqref{eq:symExtOptIneq}, we get that for 
arbitrary $N$
\begin{equation}
  \int_\psi f_\infty(\psi)(Y-Y_\psi)\otimes\ket{\psi}
  \bra{\psi}^{\otimes N}\md\psi\ge 0,
  \label{eq:ineqExt}
\end{equation}
which implies that
\begin{equation}
  \frac{\int_\psi 
  f_\infty(\psi)(Y-Y_\psi)\abs{\braket{\varphi}{\psi}}^{2N}\md\psi}
  {\int_\psi\abs{\braket{\varphi}{\psi}}^{2N}\md\psi}
  \ge 0
  \label{eq:ineqInf}
\end{equation}
for any $\ket{\varphi}$ and $N$. Note that for the complement of any 
$\varepsilon$-ball $B^c_\varphi(\varepsilon):=\big\{\ket{\psi}\bra{\psi}\bmid 
\abs{\braket{\varphi}{\psi}}^2\le 1-\varepsilon\big\}$
with $\varepsilon>0$, we have
\begin{equation}
  \lim_{N\to+\infty}\frac{
    \int_{B^c_\varphi(\varepsilon)}
  \abs{\braket{\varphi}{\psi}}^{2N}\md\psi}{
    \int_\psi
  \abs{\braket{\varphi}{\psi}}^{2N}\md\psi}=0,
\end{equation}
because the numerator decreases exponentially to zero, but the denominator 
$\int_\psi\abs{\braket{\varphi}{\psi}}^{2N}\md\psi=1/\dim(P_N^+)$ decreases 
polynomially according to Eq.~\eqref{eq:dimsym} and the relation
$\int_\psi\ket{\psi}\bra{\psi}^{\otimes N}\md\psi=P_N^+/\dim(P_N^+)$ 
\cite{Watrous2018}. Hence,
\begin{equation}
  \lim_{N\to+\infty}\frac{\abs{\braket{\varphi}{\psi}}^{2N}}
  {\int_\psi\abs{\braket{\varphi}{\psi}}^{2N}\md\psi}
  =\delta(\psi-\varphi),
  \label{eq:deltameasure}
\end{equation}
where $\delta(\cdot)$ is the Dirac-delta function. Then, in the limit 
$N\to+\infty$, Eq.~\eqref{eq:ineqInf} gives that
\begin{equation}
  \int_\psi f_\infty(\psi)(Y-Y_\psi)\delta(\psi-\varphi)\md\psi
  =f_\infty(\varphi)(Y-Y_\varphi)\ge 0,
  \label{eq:ineqCons}
\end{equation}
and hence, $Y_\varphi\le Y$ when $f_\infty(\varphi)\ne 0$.

\subsection{Non-positive-semidefinite variables}\label{ssec:nonpsdvar}

Second, we study the rank-constrained optimization for 
non-positive-semidefinite and even non-square matrices. Consider the 
rank-constrained optimization
\begin{equation}
  \begin{aligned}
    &\maxover[\omega]  && \Tr(X\omega)+\Tr(X^\dagger\omega^\dagger)\\
    &\subto && \Lambda(\omega)=Y,~\rank(\omega)\le k,
  \end{aligned}
  \label{eq:NonPSDRank}
\end{equation}
where $\omega\in\dC^{m\times n}$, and the form of the objective function is 
chosen such that it is real-valued. Here, we impose an extra assumption that 
the optimal value can be attained on bounded $\omega$, i.e., we consider the 
optimization
\begin{equation}
  \begin{aligned}
    &\maxover[\omega]  && \Tr(X\omega)+\Tr(X^\dagger\omega^\dagger)\\
    &\subto && \Lambda(\omega)=Y,~\norm{\omega}\le R,~\rank(\omega)\le k,
  \end{aligned}
  \label{eq:NonPSDRankBounded}
\end{equation}
where $\norm{\omega}=\Tr(\sqrt{\omega\omega^\dagger})$ is the trace norm of 
$\omega$, and $R$ is a suitably chosen bound depending on the actual problem.  
Especially, by taking $R\to+\infty$, Eq.~\eqref{eq:NonPSDRankBounded} turns to 
Eq.~\eqref{eq:NonPSDRank}. The key observation for solving 
Eq.~\eqref{eq:NonPSDRankBounded} is the following lemma; see 
Appendix~\ref{app:PSDembedding} for the proof.

\begin{lemma}
  A matrix $\omega\in\dF^{m\times n}$ ($\dF=\dC$ or $\dF=\dR$) satisfies that 
  $\rank(\omega)\le k$ and $\norm{\omega}\le R$ if and only if there exists 
  $A\in\dF^{m\times m}$ and $B\in\dF^{n\times n}$ such that
  \begin{equation}
    \Omega:=
    \begin{bmatrix}
      A & \omega\\
      \omega^\dagger & B
    \end{bmatrix}
    \label{eq:Omega}
  \end{equation}
  satisfies that $\Omega\ge 0$, $\Tr(\Omega)=2R$, and $\rank(\Omega)\le k$.
  \label{lem:PSDembedding}
\end{lemma}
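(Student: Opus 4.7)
The plan is to prove both directions by explicit low-rank factorizations, essentially turning the statement into a parametrized Cholesky-type argument.

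For $(\Rightarrow)$, I would start from a singular value decomposition $\omega = \sum_{i=1}^r \sigma_i u_i v_i^\dagger$ with $r = \rank(\omega) \le k$ and $\sum_i \sigma_i = \norm{\omega} \le R$. The natural ansatz is to build $\Omega$ from at most $r$ rank-one positive semidefinite blocks,
\begin{equation}
  \Omega = \sum_{i=1}^{r} \begin{bmatrix} \alpha_i u_i \\ \beta_i v_i \end{bmatrix}
  \begin{bmatrix} \alpha_i u_i^\dagger & \beta_i v_i^\dagger \end{bmatrix},
\end{equation}
with $\alpha_i,\beta_i>0$ constrained by $\alpha_i\beta_i=\sigma_i$. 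By construction the off-diagonal block is exactly $\omega$, the matrix is PSD, and $\rank(\Omega)\le r\le k$; the trace evaluates to $\sum_i(\alpha_i^2+\beta_i^2)\ge 2\sum_i\sigma_i=2\norm{\omega}$, with equality at $\alpha_i=\beta_i=\sqrt{\sigma_i}$. The reparametrization $(\alpha_i,\beta_i)\mapsto(\lambda\alpha_i,\lambda^{-1}\beta_i)$ preserves the off-diagonal block and the rank while sweeping the trace continuously over $[2\norm{\omega},+\infty)$, so some $\lambda$ lands exactly on $2R$. The degenerate case $r=0$ (i.e.\ $\omega=0$) is handled separately by setting $\Omega=2R\,\ketbra{\psi}{\psi}$ for a unit vector $\ket{\psi}$ supported on the top block (if $k\ge 1$; if $k=0$, then $\omega=0$ and one must have $R=0$, in which case $\Omega=0$ works).

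For $(\Leftarrow)$, I would take a low-rank factorization $\Omega = CC^\dagger$ with $C\in\dF^{(m+n)\times k}$ and split $C=\bigl[\begin{smallmatrix}C_1\\C_2\end{smallmatrix}\bigr]$ with $C_1\in\dF^{m\times k}$, $C_2\in\dF^{n\times k}$. Comparing blocks gives $A=C_1C_1^\dagger$, $B=C_2C_2^\dagger$, and crucially $\omega=C_1C_2^\dagger$, whence $\rank(\omega)\le k$ immediately. The trace-norm bound then follows by combining Hölder's inequality for Schatten norms with AM--GM,
\begin{equation}
  \norm{\omega} = \norm{C_1 C_2^\dagger}
  \le \sqrt{\Tr(C_1 C_1^\dagger)\,\Tr(C_2 C_2^\dagger)}
  = \sqrt{\Tr(A)\Tr(B)}
  \le \tfrac{1}{2}\bigl[\Tr(A)+\Tr(B)\bigr] = R,
\end{equation}
the middle step being the standard bound of the trace norm by the product of Hilbert--Schmidt norms.

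The only real subtlety is hitting the trace equality $\Tr(\Omega)=2R$ exactly in the forward direction, since the canonical SVD-based construction only attains the lower bound $2\norm{\omega}$; this is what the one-parameter scaling above is designed to resolve. I should also explicitly note that the argument is field-independent: over $\dF=\dR$ the SVD yields real singular vectors, and both Hölder and AM--GM hold verbatim, so the same construction covers the real case without modification.
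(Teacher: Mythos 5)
Your proof is correct, and it is half the paper's route and half a genuinely different one. In the ``only if'' direction your rank-one ansatz is the paper's construction in different clothes: writing $\Omega=\sum_i w_iw_i^\dagger$ with $w_i=\bigl[\begin{smallmatrix}\alpha_i u_i\\ \beta_i v_i\end{smallmatrix}\bigr]$ in the SVD basis gives $UAU^\dagger=\diag(\alpha_i^2)$ and $VBV^\dagger=\diag(\beta_i^2)$ with $\alpha_i^2\beta_i^2=\sigma_i^2$, which is exactly the paper's condition $(UAU^\dagger)_{ii}(VBV^\dagger)_{ii}=D_{ii}^2$; the only substantive difference is how the trace is tuned to $2R$ --- the paper does it algebraically via its Observation~(ii) (for $y\ge x\ge0$ there exist $a,b\ge0$ with $ab=x^2$, $a+b=2y$), applied per singular value, while you use the uniform rescaling $(\alpha_i,\beta_i)\mapsto(\lambda\alpha_i,\lambda^{-1}\beta_i)$ and the intermediate value theorem; both correctly close the gap between $2\norm{\omega}$ and $2R$. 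In the converse direction you genuinely depart from the paper: there the argument conjugates $\Omega$ by $\diag(U,V)$, reads off the $2\times2$ principal-minor inequalities $(UAU^\dagger)_{ii}(VBV^\dagger)_{ii}\ge D_{ii}^2$, and sums the resulting AM--GM bounds, whereas you factor $\Omega=CC^\dagger$ with $C\in\dF^{(m+n)\times k}$ and apply the Schatten--H\"older bound $\norm{C_1C_2^\dagger}\le\sqrt{\Tr(A)\Tr(B)}$ followed by a single AM--GM. Your version buys a cleaner, basis-free argument in which $\rank(\omega)\le k$ drops out of $\omega=C_1C_2^\dagger$ instead of the submatrix-rank remark, and field-independence is automatic (a real PSD $\Omega$ admits a real Gram factorization); the paper's version buys total elementarity, using nothing beyond $2\times2$ determinants. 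One small point: your aside on $k=0$ is in fact sharper than the paper, which silently ignores that degenerate corner --- for $k=0$ and $R>0$ the ``only if'' direction fails as literally stated (rank zero forces $\Omega=0$, so $\Tr(\Omega)=0\ne 2R$), so the equivalence there holds only when $R=0$; your phrase ``one must have $R=0$'' should be read as that restriction on the lemma's scope rather than as a consequence of its hypotheses.
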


By taking advantage of Lemma~\ref{lem:PSDembedding}, the optimization in 
Eq.~\eqref{eq:NonPSDRankBounded} can be written as
\begin{equation}
  \begin{aligned}
    &\maxover[\Omega]  && \Tr(L\Omega)\\
    &\subto && \Lambda\circ P(\Omega)=Y,
	       ~\Tr(\Omega)=2R,\\
    &       && \Omega\ge 0,~\rank(\Omega)\le k,
  \end{aligned}
  \label{eq:NonPSDRankEmbed}
\end{equation}
where
\begin{equation}
  \Omega=
  \begin{bmatrix}
    A & \omega\\
    \omega^\dagger & B
  \end{bmatrix},~
  L=
  \begin{bmatrix}
    0 & X^\dagger\\
    X & 0
  \end{bmatrix},~
  P(\Omega)=\omega.
\end{equation}
Then, after normalization, Eq.~\eqref{eq:NonPSDRankEmbed} is of the simple form 
of the rank-constrained SDP as in Eq.~\eqref{eq:SDPRank}. Thus, all the methods 
developed in Sec.~\ref{sec:main} are directly applicable.

Furthermore, by applying the technique from Sec.~\ref{ssec:ineqcons}, it is 
also possible to consider element-wise inequality constraints of the form 
$\Lambda(\omega)\preceq Y$ for the optimization in Eq.~\eqref{eq:NonPSDRank}, 
where $\preceq$ denotes the element-wise comparison.

\subsection{Unnormalized variables}\label{ssec:unnormvar}

Third, we consider the rank-constrained optimization without the normalization 
constraint. We consider the general rank-constrained SDP
\begin{equation}
  \begin{aligned}
    &\maxover[\rho]  && \Tr(X\rho)\\
    &\subto && \Lambda(\rho)=Y,~M(\rho)\le Z,\\
    &       && \rho\ge0,~\rank(\rho)\le k,
  \end{aligned}
  \label{eq:SDPRankUnnorm}
\end{equation}
in which both the equality constraint ($\Lambda(\rho)=Y$) and the inequality 
constraint ($M(\rho)\le Z$) are involved.

The first method we can try is to find a matrix $C$ such that 
$W:=\Lambda^*(C)>0$, where $\Lambda^*$ is the dual or adjoint map of $\Lambda$ 
\cite{Watrous2018}.  If this is possible, we can add a redundant 
normalization-like constraint
\begin{equation}
  \Tr(W\rho)=w,
\end{equation}
which follows from $\Lambda(\rho)=Y$, where $w=\Tr(CY)$. The 
strictly-positive-definite property of $W$ implies that $w>0$; otherwise the 
problem is trivial ($\rho=0$). Then, by applying the transformation 
$\widetilde{\rho}=w^{-1}\sqrt{W}\rho\sqrt{W}$, the general rank-constrained SDP 
is transformed to the form with the normalization condition over 
$\widetilde{\rho}$.  Thus, the methods in Secs.~\ref{sec:main} and 
\ref{ssec:ineqcons} are directly applicable.

In general, we can combine the techniques of the inequality constraint and the 
non-positive-semidefinite variable to tackle the problem. Again, we impose an 
extra assumption that the optimization can be attained on bounded $\rho$, i.e., 
we consider the optimization
\begin{equation}
  \begin{aligned}
    &\maxover[\rho]  && \Tr(X\rho)\\
    &\subto && \Lambda(\rho)=Y,~M(\rho)\le Z,~\Tr(\rho)\le R,\\
    &       && \rho\ge0,~\rank(\rho)\le k,
  \end{aligned}
  \label{eq:SDPRankUnnormBounded}
\end{equation}
where $R$ is a suitably chosen bound depending on the actual problem. By taking 
advantage of Lemma~\ref{lem:PSDembedding}, the optimization in 
Eq.~\eqref{eq:SDPRankUnnormBounded} can be written as
\begin{align}
  &\maxover[\Omega]  && \Tr(L\Omega) \label{eq:SDPRankUnnormBoundedEmbed}\\
  &\subto && \Lambda\circ P(\Omega)=Y,
  ~M\circ P(\Omega)\le Z,~P(\Omega)\ge 0,\notag\\
  &       && \Tr(\Omega)=2R,~\Omega\ge 0,~\rank(\Omega)\le k,\notag
\end{align}
where
\begin{equation}
  \Omega=
  \begin{bmatrix}
    A & \rho\\
    \rho & B
  \end{bmatrix},~
  L=\frac{1}{2}
  \begin{bmatrix}
    0 & X\\
    X & 0
  \end{bmatrix},~
  P(\Omega)=\rho.
\end{equation}
Then, Eq.~\eqref{eq:SDPRankUnnormBoundedEmbed} is a rank-constrained SDP with 
the normalization constraint. By applying the methods from Sec.~\ref{sec:main} 
and Sec.~\ref{ssec:ineqcons}, a complete SDP hierarchy can be constructed.

\subsection{Quadratic optimization and beyond}
\label{ssec:quadratic}

Last, we show that our method can also be used for (rank-constrained) quadratic 
and higher-order optimization. The key observation is that quadratic functions 
over $\rho$ can be written as linear functions over $\rho\otimes\rho$. For 
example, we can rewrite
\begin{equation}
  \begin{aligned}
    &\Tr(X\rho Y\rho)=\frac{1}{2}\Tr[\{V,X\otimes Y\}(\rho\otimes\rho)],\\
    &\Tr(X\rho)\Tr(Y\rho)=\Tr[(X\otimes Y)(\rho\otimes\rho)],
  \end{aligned}
  \label{eq:quadForm}
\end{equation}
where $V$ is the swap operator, and the anti-commutator $\{\cdot,\cdot\}$ is 
taken to ensure the Hermiticity. Thus, without loss of generality, we consider 
the rank-constrained quadratic optimization
\begin{equation}
  \begin{aligned}
    &\maxover[\rho]  && \Tr[X_{A_1B_1}(\rho_{A_1}\otimes\rho_{B_1})]\\
    &\subto && \Lambda(\rho)=Y,~\Tr(\rho)=1,\\
    &       && \rho\ge0,~\rank(\rho)\le k,
  \end{aligned}
  \label{eq:quadratic}
\end{equation}
where $\cH_{A_1}=\cH_{B_1}=\dC^n$, $\rho_{A_1}$ and $\rho_{B_1}$ denote the 
same state $\rho$ on $\cH_{A_1}$ and $\cH_{B_1}$, respectively, and 
$X_{A_1B_1}$ is some Hermitian matrix on $\cH_{A_1}\otimes\cH_{A_2}$. The 
generalization to the general cases as in the previous subsections is obvious.

\begin{figure}
  \centering
  \includegraphics[width=0.925\linewidth]{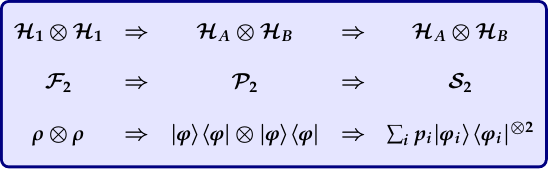}
  \caption{An illustration of the relations between the two-party feasible 
  region $\cF_2$, the two-party purification $\cP_2$, and the two-party 
extension $\cS_2$.}
  \label{fig:quadratic}
\end{figure}

To solve Eq.~\eqref{eq:quadratic}, we consider 
$\cF_2:=\big\{\rho\otimes\rho\bmid\rho\in\cF\big\}$ and 
$\cP_2:=\big\{\ket{\varphi}\bra{\varphi}\otimes\ket{\varphi}\bra{\varphi}
\bmid\ket{\varphi}\bra{\varphi}\in\cP\big\}$; see Fig.~\ref{fig:quadratic}.  
From the definitions in 
Eqs.~(\ref{eq:feasibleRegion},\,\ref{eq:defP},\,\ref{eq:defS2}), we have
\begin{equation}
  \Tr_{A_2B_2}(\cS_2)=\conv\big[\Tr_{A_2B_2}(\cP_2)\big]=\conv(\cF_2).
  \label{eq:cF2}
\end{equation}
As the set $\cS_2$ is already fully characterized by Theorem~\ref{thm:sepOpt}, 
the rank-constrained quadratic optimization in Eq.~\eqref{eq:quadratic} is 
equivalent to the conic program
\begin{alignat}{2}
  \notag
  &\maxover[\Phi_{AB}]~~&& \Tr[\tX_{AB}\Phi_{AB}]\\
  \label{eq:varOpt}
  &\subto && \Phi_{AB}\in\Sep,\,\Tr(\Phi_{AB})=1,\,
  V_{AB}\Phi_{AB}=\Phi_{AB},\\
  \notag
  &       && \tL_A\otimes\id_B(\Phi_{AB})
  =Y\otimes\Tr_A(\Phi_{AB}),
\end{alignat}
where $\tX_{AB}=X_{A_1B_1}\otimes\I_{A_2B_2}$. Accordingly, a complete 
hierarchy can be constructed similarly as in Theorem~\ref{thm:symExtOpt}.

We conclude this section with a few remarks. First, taking $k=n$ (i.e., taking 
the rank bound to be the dimension of $\rho$) corresponds to the quadratic 
programming without rank constraint. Second, this method can be used for 
various uncertainty relations in quantum information, in which the minimization 
of the variance is automatically a quadratic program. Finally, the above 
procedure can be easily generalized to higher-order programming.  The main idea 
is that all the results in Sec.~\ref{sec:main} can be directly generalized to 
fully characterize
\begin{equation}
  \cS_N:=\conv\Big(\Big\{\ket{\varphi}\bra{\varphi}^{\otimes N}
  \;\Big|\;\ket{\varphi}\bra{\varphi}\in\cP\Big\}\Big),
  \label{eq:cSN}
\end{equation}
and $\cS_N$ satisfies that $\Tr_{A_2B_2\cdots Z_2}(\cS_N)=\conv(\cF_N)$, where 
$\cF_N:=\big\{\rho^{\otimes N}\bmid\rho\in\cF\big\}$; see 
Appendix~\ref{app:sepOpt} for more details. Thus, the (rank-constrained) 
higher-order optimization over $\rho^{\otimes N}$ is fully characterizable with 
$\cS_N$.

\section{Conclusion}

We have introduced a method to map SDPs with  rank constraints to optimizations 
over separable quantum states. This allowed us to construct a complete 
hierarchy of SDPs for the original rank-constrained SDP. We studied various 
examples and demonstrated the practical viability of our approach. Finally, we 
discussed several extensions to more general problems.

For further research, there are several interesting directions.  First, 
concerning the presented method, a careful study of possible large-scale 
implementations, including the exploitation of possible symmetries, is 
desirable. This may finally shed new light on some of the examples presented 
here. Second, another promising method for solving the convex optimization 
problems in Theorems~\ref{thm:sepOpt} and \ref{thm:sepOptR} is to consider the 
dual conic programs, which correspond to the optimization over entanglement 
witnesses.  The benefit of this method will be that any feasible witness 
operator can provide a certified upper bound for the optimization problem.  
Third, on a broader perspective, it would be interesting to study other SDPs 
with additional constraints. An example is conditions in a product form, which 
frequently occur in quantum information due to the tensor product structure of 
the underlying Hilbert spaces. Finding SDP hierarchies for such problems will 
be very useful for the progress of this field.

\begin{acknowledgments}
  We would like to thank Matthias Kleinmann and Nikolai Wyderka for 
  discussions. To formulate and solve the SDPs, we used SDPA \cite{SDPA7} as 
  well as CVXPY \cite{Diamond.Boyd2016,Agrawal.etal2018} with SCS 
  \cite{ODonoghue.etal2016} and MOSEK \cite{Mosek} solvers. This work was 
  supported by the Deutsche Forschungsgemeinschaft (DFG, German Research 
  Foundation, project numbers 447948357 and 440958198), the Sino-German Center 
  for Research Promotion (Project M-0294), the ERC (Consolidator Grant 
  683107/TempoQ), and the House of Young Talents Siegen.
\end{acknowledgments}

\appendix

\newtheorem{manualtheoreminner}{Theorem}
\newenvironment{manualtheorem}[1]{%
  \renewcommand\themanualtheoreminner{#1}%
\manualtheoreminner}{\endmanualtheoreminner}

\newtheorem{manuallemmainner}{Lemma}
\newenvironment{manuallemma}[1]{%
  \renewcommand\themanuallemmainner{#1}%
\manuallemmainner}{\endmanuallemmainner}

\section{Characterization of $\cS_2$ and $\cS_N$}\label{app:sepOpt}

First, we prove that the conditions in 
Eqs.~(\ref{eq:S2Sep},\ref{eq:S2Symm},\ref{eq:S2Marginal}) are also sufficient 
for $\Phi_{AB}\in\cS_2$. The constraints in 
Eqs.~(\ref{eq:S2Sep},\,\ref{eq:S2Symm}) imply that $\Phi_{AB}$ is a separable 
state in the symmetric subspace, which always admits the form 
\cite{Toth.Guehne2009}
\begin{equation}
  \Phi_{AB}=\sum_ip_i\ket{\varphi_i}\bra{\varphi_i}_A
  \otimes\ket{\varphi_i}\bra{\varphi_i}_B,
  \label{eq:symmPhiA}
\end{equation}
where the $p_i$ form a probability distribution and the $\ket{\varphi_i}$ are 
normalized. Hereafter, without loss of generality, we assume that all $p_i$ 
are strictly positive. From Eqs.~(\ref{eq:defP},\,\ref{eq:defS2}), to show that 
$\Phi_{AB}\in\cS_2$ we only need to show that
\begin{equation}
  \tL(\ket{\varphi_i}\bra{\varphi_i})=Y
  \label{eq:marginal}
\end{equation}
for all $\ket{\varphi_i}$. To this end, we introduce an auxiliary map
\begin{equation}
  \cE(\cdot)=
  \tL(\cdot)-\Tr(\cdot)Y.
  \label{eq:defE}
\end{equation}
Thus, the last constraint in Eq.~\eqref{eq:sepOpt} is equivalent to 
$\cE_A\otimes \id_B(\Phi_{AB})=0$, which implies that
\begin{equation}
  \cE_A\otimes\cE_B^\dagger(\Phi_{AB})=0,
  \label{eq:cEnull}
\end{equation}
where $\cE^\dagger$ is the linear map satisfying 
$\cE^\dagger(X)=[\cE(X)]^\dagger$ for any Hermitian matrix $X$, and the 
subscripts $A,B$ in $\cE_A,\cE^\dagger_B$ indicate that the maps operate on 
systems $\cH_A$ and $\cH_B$, respectively.
We note that $\cE^\dagger$ is not the dual map of $\cE$. Then, 
Eqs.~(\ref{eq:symmPhiA},\,\ref{eq:cEnull}) imply that
\begin{equation}
  \sum_ip_iE_i\otimes E_i^\dagger=0,
  \label{eq:Ei}
\end{equation}
where $E_i=\cE(\ket{\varphi_i}\bra{\varphi_i})$. Let $V$ be the swap operator 
acting on the same space as $E_i\otimes E_i^\dagger$, then the relations 
$\Tr[V(E_i\otimes E_i^\dagger)]=\Tr(E_iE_i^\dagger)$ imply that
\begin{equation}
  \Tr\left[V\left(\sum_ip_iE_i\otimes E_i^\dagger\right)\right]
  =\sum_ip_i\Tr(E_iE_i^\dagger)=0.
  \label{eq:Einull}
\end{equation}
Furthermore, as $\Tr(E_iE_i^\dagger)>0$ unless $E_i=0$, we obtain
\begin{equation}
  \cE(\ket{\varphi_i}\bra{\varphi_i})=E_i=0
  \label{eq:Epinull}
\end{equation}
for all $\ket{\varphi_i}$. Then, Eq.~\eqref{eq:marginal} follows directly from 
the definition of $\cE$ in Eq.~\eqref{eq:defE}, and hence $\Phi_{AB}\in\cS_2$.  
This proves Theorem~\ref{thm:sepOpt}.

Second, we show that Eq.~(\ref{eq:symmPhi}) and $\Phi_{AB}^{T_A}=\Phi_{AB}$ 
imply that $\ket{\varphi_i}\bra{\varphi_i}\in\dR^{nk\times nk}$ for all $i$.  
From the form of $\Phi_{AB}$ in Eq.~\eqref{eq:symmPhiA}, we obtain
\begin{equation}
  \Phi_{AB}^{T_A}=\sum_ip_i\ket{\varphi_i^*}\bra{\varphi_i^*}_A
  \otimes\ket{\varphi_i}\bra{\varphi_i}_B,
  \label{eq:symmPhiPT}
\end{equation}
where $\ket{\varphi_i^*}$ denote the complex conjugate of $\ket{\varphi_i}$.  
Then, the fact that $\Phi_{AB}^{T_A}=\Phi_{AB}$ is a separable state within the 
symmetric subspace implies that
\begin{equation}
  \ket{\varphi_i}\bra{\varphi_i}=\ket{\varphi_i^*}\bra{\varphi_i^*},
\end{equation}
i.e., $\ket{\varphi_i}\bra{\varphi_i}\in\dR^{nk\times nk}$ for all $i$. This 
proves Theorem~\ref{thm:sepOptR}. Notably, this argument can be directly 
generalized to multi-party states, which provides a simple proof for the result 
in Ref.~\cite{Chen.etal2019}.

Last, we show that the method for characterizing $\cS_2$ can be directly 
generalized to characterizing $\cS_N$. Recall that $\cS_N$ is defined as
\begin{equation}
  \cS_N:=\conv\big(\big\{\ket{\varphi}\bra{\varphi}^{\otimes 
  N}\bmid\ket{\varphi}\bra{\varphi}\in\cP\big\}\big),
  \label{eq:cSNA}
\end{equation}
where $\cP$ is defined as
\begin{equation}
  \cP=\big\{\ket{\varphi}\bra{\varphi}\bmid
    \tL(\ket{\varphi}\bra{\varphi})=Y,~\braket{\varphi}{\varphi}=1
  \big\}.
  \label{eq:defPA}
\end{equation}
Then, we show that $\Phi_{ABC\cdots Z}\in\cS_N$ if and only if
\begin{align}
  \label{eq:SNSep}
  &\Phi_{ABC\cdots Z}\in\Sep,~\Tr(\Phi_{ABC\cdots Z})=1,\\
  \label{eq:SNSymm}
  &P_N^+\Phi_{ABC\cdots Z}P_N^+=\Phi_{ABC\cdots Z},\\
  \label{eq:SNMarginal}
  &\tL_A\otimes\id_{BC\cdots Z}(\Phi_{ABC\cdots Z})
  =Y\otimes\Tr_A(\Phi_{ABC\cdots Z}).
\end{align}
Similarly to case of $\cS_2$, the constraints in Eqs.~\eqref{eq:SNSep} and 
\eqref{eq:SNSymm} imply that $\Phi_{ABC\cdots Z}$ is a separable state in the 
symmetric subspace, which always admits the form \cite{Toth.Guehne2009}
\begin{equation}
  \Phi_{ABC\cdots Z}=\sum_ip_i\ket{\varphi_i}\bra{\varphi_i}^{\otimes N},
  \label{eq:symmPhiNA}
\end{equation}
where the $p_i$ form a probability distribution and the $\ket{\varphi_i}$ are 
normalized. Thus, Eq.~\eqref{eq:SNMarginal} implies that
\begin{equation}
  \cE_A\otimes\cE_B^\dagger\otimes\Tr_{C\cdots Z}(\Phi_{ABC\cdots Z})
  =\sum_ip_iE_i\otimes E_i^\dagger=0,
  \label{eq:cEnullN}
\end{equation}
where $E_i=\cE(\ket{\varphi_i}\bra{\varphi_i})$. Then, 
$\ket{\varphi_i}\bra{\varphi_i}\in\cP$ follows from 
Eqs.~(\ref{eq:Einull},\,\ref{eq:Epinull}), which prove that $\Phi_{ABC\cdots 
Z}\in\cS_N$.  Similarly, in the case of $\dF=\dR$, we only need to add the 
partial-transpose-invariant constraint
\begin{equation}
  \Phi_{ABC\cdots Z}^{T_A}=\Phi_{ABC\cdots Z}.
\end{equation}

\section{Proof of Theorem~\ref{thm:symExtOpt}}\label{app:symExtOpt}

\begin{manualtheorem}{\ref{thm:symExtOpt}}
  For $\dF=\dC$, let $\xi$ be the solution of the rank-constrained SDP in 
  Eq.~\eqref{eq:SDPRank}. Then, for any $N$, $\xi$ is upper bounded by the 
  solution $\xi_N$ of the following SDP hierarchy
  \begin{alignat}{2}
    &\maxover[\Phi_{AB\cdots Z}]  && \Tr(\tX_A\otimes\I_{B\cdots 
    Z}\Phi_{AB\cdots Z})\notag\\
    \label{eq:symExtOptA}
    &\subto && \Phi_{AB\cdots Z}\ge 0,~\Tr(\Phi_{AB\cdots Z})=1,\\
    &       && P_N^+\Phi_{AB\cdots Z}P_N^+=\Phi_{AB\cdots Z},\notag\\
    &       && \tL_A\otimes\id_{B\cdots Z}(\Phi_{AB\cdots Z})\notag
    =Y\otimes\Tr_A(\Phi_{AB\cdots Z}).
  \end{alignat}
  Furthermore, the SDP hierarchy is complete in the sense that 
  $\xi_{N+1}\le\xi_N$ and $\lim_{N\to+\infty}\xi_N=\xi$.
\end{manualtheorem}

The proof is similar to the proof of Theorem~2 in Ref.~\cite{Yu.etal2021}. For 
completeness, we also present it here. To prove Theorem~\ref{thm:symExtOpt}, we 
take advantage of the following lemma, which can be viewed as a special case of 
the quantum de Finetti theorem \cite{Christandl.etal2007}; see also related 
results in Refs.~\cite{Navascues.etal2009b,Berta.etal2018}.

\begin{lemma}
  Let $\rho_N$ be an $N$-party quantum state in the symmetric subspace $P_N^+$, 
  then for all $\ell<N$ there exists an $\ell$-party quantum state
  \begin{equation}
    \sigma_\ell=\sum_\mu p_\mu\ket{\varphi_\mu}\bra{\varphi_\mu}^{\otimes\ell},
    \label{eq:ksep}
  \end{equation}
  i.e., a fully separable state in $P_\ell^+$, such that
  \begin{equation}
    \norm{\Tr_{N-\ell}(\rho_N)-\sigma_\ell}
    \le\frac{4\ell D}{N},
    \label{eq:deFinetti}
  \end{equation}
  where $\norm{\cdot}$ is the trace norm and $D$ is the local dimension.
  \label{lem:deFinetti}
\end{lemma}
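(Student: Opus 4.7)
Plan: I would adapt the Christandl--K\"onig--Mitchison--Renner construction of the quantum de Finetti theorem: build the separable approximation to $\rho_\ell := \Tr_{N-\ell}(\rho_N)$ by measuring $\rho_N$ with an informationally complete coherent-state POVM on the $N$-party symmetric subspace. Using the standard identity $\int\ketbra{\phi}{\phi}^{\otimes N}\md\phi = P_N^+/d_{N,D}$, where $d_{N,D}:=\dim P_N^+ = \binom{N+D-1}{D-1}$ and $\md\phi$ is the unitarily invariant probability measure on pure states of $\dC^D$, the family $\{d_{N,D}\ketbra{\phi}{\phi}^{\otimes N}\md\phi\}$ is a valid POVM on $P_N^+$. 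Measuring $\rho_N$ yields outcome $\phi$ with density $p(\phi):=d_{N,D}\bra{\phi^{\otimes N}}\rho_N\ket{\phi^{\otimes N}}\ge 0$, and I set $\sigma_\ell:=\int p(\phi)\ketbra{\phi}{\phi}^{\otimes\ell}\md\phi$. This state lies in $P_\ell^+$ and, after any weak-convergent discretization of the measure, takes exactly the convex-combination form of Eq.~(\ref{eq:ksep}).

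The core of the argument is to write $\rho_\ell$ and $\sigma_\ell$ in a common form. Since $\rho_N\in P_N^+$, I can use $\rho_N = P_N^+\rho_N = d_{N,D}\int\ketbra{\phi}{\phi}^{\otimes N}\rho_N\md\phi$, and tracing out the last $N-\ell$ subsystems yields $\rho_\ell = d_{N,D}\int \Pi_\phi B(\phi)\md\phi$, with $\Pi_\phi:=\ketbra{\phi}{\phi}^{\otimes\ell}$ and $B(\phi):=\Tr_{\ell+1,\dots,N}[(\I_\ell\otimes\ketbra{\phi}{\phi}^{\otimes(N-\ell)})\rho_N]$ a positive operator on the first $\ell$ subsystems. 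An analogous computation gives $\sigma_\ell = d_{N,D}\int\Pi_\phi B(\phi)\Pi_\phi\md\phi$, since $\Tr[\Pi_\phi B(\phi)]\Pi_\phi = \Pi_\phi B(\phi)\Pi_\phi$ for the rank-one projector $\Pi_\phi$. The difference therefore admits the compact expression $\rho_\ell - \sigma_\ell = d_{N,D}\int \Pi_\phi B(\phi)(\I - \Pi_\phi)\md\phi$.

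Two moment evaluations drive the final bound. First, $\int\Tr[\Pi_\phi B(\phi)]\md\phi = \int\bra{\phi^{\otimes N}}\rho_N\ket{\phi^{\otimes N}}\md\phi = \Tr(\rho_N)/d_{N,D} = 1/d_{N,D}$. Second, $\int\Tr[B(\phi)]\md\phi = \Tr[(\I_\ell\otimes P_{N-\ell}^+/d_{N-\ell,D})\rho_N] = 1/d_{N-\ell,D}$, using the key observation that $(\I_\ell\otimes P_{N-\ell}^+)\rho_N = \rho_N$ (the fully symmetric projector $P_N^+$ is in particular symmetric under permutations of the last $N-\ell$ indices). Combined with the ratio $d_{N-\ell,D}/d_{N,D} = \prod_{j=0}^{\ell-1}(N-j)/(N+D-1-j)\ge 1 - \ell(D-1)/N$, these moments should convert into the target $O(\ell D/N)$ estimate of the trace norm $\norm{\rho_\ell - \sigma_\ell}$.

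The main obstacle will be sharpening the trace-norm step enough to recover the claimed linear constant $4\ell D/N$. A direct Cauchy--Schwarz argument, $\norm{\Pi_\phi B(\phi)(\I - \Pi_\phi)}\le\sqrt{\Tr[\Pi_\phi B(\phi)]\cdot\Tr[(\I-\Pi_\phi)B(\phi)]}$ followed by Cauchy--Schwarz under the $\phi$-integral, only yields $\norm{\rho_\ell - \sigma_\ell}\le\sqrt{d_{N,D}/d_{N-\ell,D} - 1} = O(\sqrt{\ell D/N})$, weaker by a square root than the claim in the relevant regime $\ell D\ll N$. Closing this gap will require either exploiting that $B(\phi)$ is actually supported on the low-dimensional $\ell$-party symmetric subspace $P_\ell^+$ of size $d_{\ell,D}$ rather than on the full $D^\ell$-dimensional tensor space, or invoking a gentle-measurement-type bound on the fidelity between $\rho_N$ and its coherent-state-measured version $\int p(\phi)\ketbra{\phi}{\phi}^{\otimes N}\md\phi$ and then combining it with the Fuchs--van de Graaf inequality $\norm{\rho-\sigma}\le 2\sqrt{1-F(\rho,\sigma)^2}$ to linearize the dimension ratio.
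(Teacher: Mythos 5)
You have reproduced, correctly, the first half of the standard construction --- but it is worth noting that the paper itself never proves this lemma: in Appendix~B it is introduced as ``a special case of the quantum de Finetti theorem'' and imported from Ref.~\cite{Christandl.etal2007} (Theorem~II.8), so the benchmark is that proof, and your plan follows the same measure-and-prepare route. Your setup is sound: the coherent-state POVM $\{d_{N,D}\ketbra{\phi}{\phi}^{\otimes N}\md\phi\}$ is a valid POVM on $P_N^+$, the identities $\rho_\ell=d_{N,D}\int\Pi_\phi B(\phi)\md\phi$ and $\sigma_\ell=d_{N,D}\int\Pi_\phi B(\phi)\Pi_\phi\md\phi$ are correct, as are both moment evaluations and the ratio $d_{N-\ell,D}/d_{N,D}=\prod_{j=0}^{\ell-1}(N-j)/(N+D-1-j)$. (A minor point: for the finite sum in Eq.~\eqref{eq:ksep} you do not need a weak-convergence discretization; Carath\'eodory's theorem in the finite-dimensional operator space gives an exact finite convex decomposition.) However, the proof is genuinely incomplete exactly where you flag it: you never obtain the linear rate, and neither of your proposed repairs closes the gap. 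The gentle-measurement/Fuchs--van de Graaf route is structurally incapable of doing so, since $\norm{\rho-\sigma}\le 2\sqrt{1-F^2}$ itself converts a fidelity deficit $1-F=O(\ell D/N)$ into a trace-norm bound $O(\sqrt{\ell D/N})$; no fidelity estimate of the correct order can produce $4\ell D/N$ through that inequality. Likewise, restricting $B(\phi)$ to the symmetric subspace $P_\ell^+$ does not help, because the square root in your estimate comes from Cauchy--Schwarz itself, not from dimension counting.

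The missing idea is that the closing step is an \emph{operator} inequality, not a norm estimate. First reduce to pure $\rho_N=\ketbra{\psi}{\psi}$ with $\ket{\psi}\in P_N^+$ (eigendecompose $\rho_N$ inside the symmetric subspace; both $\rho_\ell$ and $\sigma_\ell$ are linear in $\rho_N$ and the trace norm is convex). Then $B(\phi)=\ketbra{\tilde\psi(\phi)}{\tilde\psi(\phi)}$ with $\ket{\tilde\psi(\phi)}=(\I\otimes\bra{\phi}^{\otimes(N-\ell)})\ket{\psi}$, so that $\rho_\ell=d_{N-\ell,D}\int\ketbra{\tilde\psi(\phi)}{\tilde\psi(\phi)}\md\phi$ and $\sigma_\ell=d_{N,D}\int\Pi_\phi\ketbra{\tilde\psi(\phi)}{\tilde\psi(\phi)}\Pi_\phi\md\phi$. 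The route that yields the linear rate --- and the one behind Theorem~II.8 of Ref.~\cite{Christandl.etal2007} --- is to prove $\sigma_\ell\ge c\,\rho_\ell$ with $c=d_{N-\ell,D}/d_{N,D}$; this uses the symmetric-subspace (coherent-state) structure globally and is emphatically not a pointwise comparison of the integrands. Once this holds, tracelessness finishes the job: $\rho_\ell-\sigma_\ell\le(1-c)\rho_\ell$ and $\Tr(\rho_\ell-\sigma_\ell)=0$ imply $\norm{\rho_\ell-\sigma_\ell}=2\Tr\big[(\rho_\ell-\sigma_\ell)_+\big]\le 2(1-c)$, and your own product formula gives $1-c\le\sum_{j=0}^{\ell-1}(D-1)/(N+D-1-j)\le 2\ell D/N$ for $N\ge 2\ell$, while for $N<2\ell$ the claim is trivial because $4\ell D/N>2\ge\norm{\rho_\ell-\sigma_\ell}$. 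Incidentally, for the only place the paper uses this lemma (the convergence argument in Appendix~B), your weaker $O(\sqrt{\ell D/N})$ bound would already suffice, since only vanishing as $N\to+\infty$ is needed --- but it does not prove the lemma as stated.
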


The part that $\xi$ is upper bounded by $\xi_N$ for any $N$ is obvious.  Hence, 
we only need to prove that $\xi_{N+1}\le\xi_N$ and 
$\lim_{N\to+\infty}\xi_N=\xi$.

We first show that $\xi_{N+1}\le\xi_N$. This follows from 
the fact that if a multi-party quantum state is within the symmetric 
subspace, so are the reduced states. Mathematically, we have the relation
\begin{equation}
  (P_N^+\otimes\I_{nk})P_{N+1}^+=P_{N+1}^+.
\end{equation}
Suppose that there exists an $(N+1)$-party extension $\Phi_{AB\cdots ZZ'}$ 
satisfying all constraints that achieves the maximum $\xi_{N+1}$ in 
Theorem~\ref{thm:symExtOpt}. Then, the constraint
$P_{N+1}^+\Phi_{AB\cdots ZZ'}P_{N+1}^+=\Phi_{AB\cdots ZZ'}$ implies that
\begin{equation}
  \begin{aligned}
    &(P_N^+\otimes\I_{nk})\Phi_{AB\cdots ZZ'}(P_N^+\otimes\I_{nk})\\
    =&P_N^+\otimes\I_{nk}P_{N+1}\Phi_{AB\cdots ZZ'}P_{N+1}P_N^+\otimes\I_{nk}\\
    =&P_{N+1}\Phi_{AB\cdots ZZ'}P_{N+1}\\
    =&\Phi_{AB\cdots ZZ'}.
  \end{aligned}
  \label{eq:subsysSymm}
\end{equation}
Thus, one can easily verify that the reduced state $\Tr_{Z'}(\Phi_{AB\cdots 
ZZ'})$ is an $N$-party extension satisfying all the constraints in 
Theorem~\ref{thm:symExtOpt} with objective value 
$\xi_{N+1}$.  From this, the result $\xi_{N+1}\le\xi_N$ follows.

Next, we prove the convergence part, i.e., $\lim_{N\to+\infty}\xi_N=\xi$.  
Suppose that the solution $\xi_N$ of the $N$-party extension in 
Theorem~\ref{thm:symExtOpt} is achieved by the quantum state $\Phi_{AB\cdots 
Z}$.  Let $\Phi^N_{AB}=\Tr_{C\cdots Z}(\Phi_{ABC\cdots Z})$, then $\Phi^N_{AB}$ 
satisfies that
\begin{equation}
  \begin{aligned}
    &\Tr(\tX_A\otimes\I_B\Phi_{AB}^N)=\xi_N,~\Tr(\Phi^N_{AB})=1,\\
    &\tL_A\otimes\id_B(\Phi^N_{AB})=Y\otimes\Tr_A(\Phi^N_{AB}).
  \end{aligned}
  \label{eq:reducedPhi}
\end{equation}
Further, Lemma~\ref{lem:deFinetti} implies that there exist separable states
$\widetilde{\Phi}^N_{AB}$ such that
\begin{align}
  &V_{AB}\widetilde{\Phi}^N_{AB}=\widetilde{\Phi}^N_{AB},
  \label{eq:PhideFinetti1}\\
  &\norm{\Phi^N_{AB}-\widetilde{\Phi}^N_{AB}}\le\frac{8nk}{N}.
  \label{eq:PhideFinetti2}
\end{align}
As the set of quantum states for any fixed dimension is compact, we can choose 
a convergent subsequence $\Phi^{N_i}_{AB}$ of the sequence $\Phi^{N}_{AB}$.  
Thus, Eq.~\eqref{eq:PhideFinetti2} implies that
\begin{equation}
  \Phi_{AB}:=\lim_{i\to+\infty}\Phi^{N_i}_{AB}
  =\lim_{i\to+\infty}\widetilde{\Phi}^{N_i}_{AB}.
  \label{eq:convergentPhi}
\end{equation}
As all $\widetilde{\Phi}^{N_i}_{AB}$ are separable and the set of separable 
states is closed, thus 
$\Phi_{AB}=\lim_{i\to+\infty}\widetilde{\Phi}^{N_i}_{AB}$ is separable.  
Further, as all the functions on $\Phi^N_{AB}$ or $\widetilde{\Phi}^N_{AB}$ in 
Eqs.~(\ref{eq:reducedPhi},\,\ref{eq:PhideFinetti1}) are continuous, 
Eq.~\eqref{eq:convergentPhi} implies that $\Phi_{AB}$ satisfies all the 
constraints in Eq.~\eqref{eq:sepOpt}. In other words, $\Phi_{AB}$ is a feasible 
point of program~\eqref{eq:sepOpt}, thus 
$\Tr(\widetilde{X}_A\otimes\I_B\Phi_{AB})=\lim_{N\to+\infty}\xi_N\le\xi$.  
Together with the fact that $\xi_N\ge\xi$, we then have 
$\lim_{N\to+\infty}\xi_N=\xi$.

At last, we would like to note that the above proof also gives an estimate of 
the convergence rate of the SDP hierarchy.  Without loss of generality, we 
assume some bound conditions on $\Lambda,X,Y$ such that their operations do not 
increase the trace norm, e.g., the diamond norm of $\Lambda(\cdot)-\Tr(\cdot)Y$ 
and the spectrum norm of $X$ are no greater than one.  Then, each level of the 
SDP hierarchy provides an $\cO(nk/N)$ approximate solution to the convex 
optimization in Theorem~\ref{thm:sepOpt}.  More precisely, for each $N$ there 
exists $\varepsilon=\cO(nk/N)$ such that
\begin{equation}
  \xi\le\xi_N\le\Tr[(\tX_A\otimes\I_B)\Phi_{AB}]+\varepsilon,
  \label{eq:complexitySep1}
\end{equation}
where $\Phi_{AB}$ is $\varepsilon$-close to the feasible region in the sense 
that
\begin{align}
  \label{eq:complexitySep2}
  &\Phi_{AB}\in\Sep,~\Tr(\Phi_{AB})=1,
  ~V_{AB}\Phi_{AB}=\Phi_{AB},\\
  \label{eq:complexitySep3}
  &\norm{\tL_A\otimes\id_B(\Phi_{AB})
  -Y\otimes\Tr_A(\Phi_{AB})}\le\varepsilon.
\end{align}
This can be proved by taking $\Phi_{AB}$ as $\widetilde{\Phi}^N_{AB}$ in above 
proof and the approximate rate $\varepsilon=\cO(nk/N)$ results from 
Eq.~\eqref{eq:PhideFinetti2}.

\section{Continuous rank from inherent symmetry}\label{app:contrank}

In the following, we consider the second level of the hierarchy, i.e., $N=2$.  
As described in Sec.~\ref{ssec:globalsym}, the corresponding SDPs can be 
simplified to Eqs.~\eqref{eq:s2T}~and~\eqref{eq:s2Tr} for the complex and real 
cases, respectively.  Here, the parameter $k$, that constrains the rank in the 
rank-constrained SDP in Eq.~\eqref{eq:SDPRank}, in principle, does not need to 
be an integer. Indeed, the following observations show that it is not 
unreasonable to consider this, in some sense, continuous rank.
\begin{observation}\label{obs:contrank}
  A feasible point $\Phi_{A_1B_1} = k^2 \Phi_I + k \Phi_V$ of the SDP in 
  Eq.~\eqref{eq:s2T} with parameter $k \ge 1$ is also a feasible point 
  $\Phi_{A_1B_1} = k'^2 \Phi_I' + k' \Phi_V' $ of the SDP with parameter $k' 
  \ge k$.
\end{observation}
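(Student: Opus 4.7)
The plan is to construct a feasible $(\Phi_I',\Phi_V')$ for parameter $k'$ via the linear ansatz
\begin{equation*}
  \Phi_I' = \alpha \Phi_I + \beta \Phi_V,
  \qquad
  \Phi_V' = V_{A_1B_1}\Phi_I' = \beta \Phi_I + \alpha \Phi_V,
\end{equation*}
where the second equality uses $\Phi_V = V_{A_1B_1}\Phi_I$ and $V_{A_1B_1}^2=\I$, so the swap-symmetry constraint is automatic. The choice of $\alpha,\beta$ is motivated conceptually by the origin of Eq.~\eqref{eq:s2T}: it is Theorem~\ref{thm:sepOpt} specialized to $SU(k)\otimes SU(k)$-invariant $\Phi_{AB}$. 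Given such a $\Phi_{AB}$, I would isometrically embed $\dC^k\hookrightarrow \dC^{k'}$ on both $A_2$ and $B_2$ and then twirl over $SU(k')\otimes SU(k')$; separability, swap symmetry, and the marginal condition of Theorem~\ref{thm:sepOpt} are all preserved by both operations. Computing the $SU(k')$-twirls of the embedded identity and swap, which by the Werner argument lie in $\mathrm{span}\{\I_{A_2B_2},V_{A_2B_2}\}$, fixes
\begin{equation*}
  \alpha = \frac{k(kk'-1)}{k'(k'^2-1)},
  \qquad
  \beta = \frac{k(k'-k)}{k'(k'^2-1)},
\end{equation*}
with the degenerate case $k'=k=1$ handled trivially by $(\Phi_I',\Phi_V')=(\Phi_I,\Phi_V)$.

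Given this ansatz, the bulk of the verification reduces to two algebraic identities, both proved by direct substitution,
\begin{equation*}
  k'^2\Phi_I' + k'\Phi_V' = k^2\Phi_I + k\Phi_V,
  \qquad
  k'\Phi_I' + \Phi_V' = \tfrac{k}{k'}(k\Phi_I + \Phi_V).
\end{equation*}
The first states that the underlying $\Phi_{A_1B_1}$ is preserved and immediately yields the new trace normalization. The second inherits the marginal constraint $\Lambda_{A_1}\otimes\id_{B_1}(k'\Phi_I'+\Phi_V') = Y\otimes\Tr_{A_1}(k'\Phi_I'+\Phi_V')$ from the original one up to the positive scalar $k/k'$. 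For positivity in the symmetric and antisymmetric subspaces one uses $\Phi_I'\pm\Phi_V' = (\alpha\pm\beta)(\Phi_I\pm\Phi_V)$, together with $\alpha+\beta = k(k+1)/[k'(k'+1)]>0$ and $\alpha-\beta = k(k-1)/[k'(k'-1)]\ge 0$, where $k\ge 1$ is used for the latter. The combined PPT-like inequality $\Phi_I'^{T_{A_1}}+k'\Phi_V'^{T_{A_1}}\ge 0$ likewise collapses to the positive multiple $(k/k')(\Phi_I^{T_{A_1}}+k\Phi_V^{T_{A_1}})$.

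The main obstacle is the remaining constraint $\Phi_I'^{T_{A_1}} = \alpha\Phi_I^{T_{A_1}} + \beta\Phi_V^{T_{A_1}}\ge 0$, since the sign of $\Phi_V^{T_{A_1}}$ alone is unknown and so $\alpha\ge 0$ does not suffice. The resolution is to rewrite it as an explicitly nonnegative combination of the two input positive operators,
\begin{equation*}
  \Phi_I'^{T_{A_1}}
  = \Bigl(\alpha-\tfrac{\beta}{k}\Bigr)\Phi_I^{T_{A_1}}
  + \tfrac{\beta}{k}\bigl(\Phi_I^{T_{A_1}}+k\Phi_V^{T_{A_1}}\bigr),
\end{equation*}
and to verify nonnegativity of both coefficients. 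The factor $\beta/k$ is nonnegative iff $k'\ge k$, and a short computation gives $\alpha-\beta/k = (k^2-1)/(k'^2-1)$, which is nonnegative iff $k\ge 1$. The two hypotheses of the observation are thereby used precisely where they are needed, completing the verification.
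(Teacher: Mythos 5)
Your proposal is correct and follows essentially the same route as the paper's proof: the identical ansatz $\Phi_I'=\alpha\Phi_I+\beta\Phi_V$ with the same coefficients $\alpha=\frac{k(kk'-1)}{k'(k'^2-1)}$, $\beta=\frac{k(k'-k)}{k'(k'^2-1)}$, the same factorizations $\Phi_I'\pm\Phi_V'=\frac{k(k\pm1)}{k'(k'\pm1)}(\Phi_I\pm\Phi_V)$ and $\Phi_I'^{T_{A_1}}+k'\Phi_V'^{T_{A_1}}=\frac{k}{k'}\bigl(\Phi_I^{T_{A_1}}+k\Phi_V^{T_{A_1}}\bigr)$, and your rewriting of $\Phi_I'^{T_{A_1}}$ as $\frac{k^2-1}{k'^2-1}\Phi_I^{T_{A_1}}+\frac{k'-k}{k'(k'^2-1)}\bigl(\Phi_I^{T_{A_1}}+k\Phi_V^{T_{A_1}}\bigr)$ is exactly the paper's decomposition. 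The embedding-and-twirling motivation for the coefficients is a nice conceptual addition absent from the paper, but the verification itself coincides.
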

\begin{proof}
  The observation is trivial when $k'=k$. In the following, we assume that 
  $k'>k\ge 1$. From the relations that $\Phi_V=V\Phi_I$, $\Phi'_V=V\Phi'_I$, 
  and $\Phi_{A_1B_1}=k^2\Phi_I+k\Phi_V=k'^2\Phi'_I+k'\Phi_V'$, one can obtain 
  that
  \begin{equation}
    \begin{aligned}
      k'^2\Phi'_I+k'\Phi_V'&=k^2\Phi_I+k\Phi_V,\\
      k'\Phi'_I+k'^2\Phi_V'&=k\Phi_I+k^2\Phi_V,
    \end{aligned}
  \end{equation}
  which further imply that
  \begin{equation}
    \Phi_I' = \frac{k(kk'-1)}{k'(k'^2-1)}\Phi_I
    +\frac{k(k'-k)}{k'(k'^2-1)}\Phi_V.
  \end{equation}
  Thus, we can express $\Phi_I'$ and $\Phi_V'$ in terms of $\Phi_I$ and 
  $\Phi_V$. The feasibility follows from the feasibility of $\Phi_{A_1B_1} 
  = k^2 \Phi_I + k \Phi_V$
  and
  \begin{equation}
    \begin{aligned}
      &\Phi_I' \pm \Phi_V' = \frac{k(k\pm 1)}{k'(k'\pm 1)}
      \Big(\Phi_I \pm \Phi_V\Big), \\
      &\big(\Phi_I'\big)^{T_{A_1}} = \frac{k'-k}{k'(k'^2-1)} 
      \left(\Phi_I^{T_{A_1}} + k \Phi_V^{T_{A_1}}\right)+ \frac{k^2-1}{k'^2-1} 
      \Phi_I^{T_{A_1}},\\
      &\big(\Phi_I'\big)^{T_{A_1}} + k'\big(\Phi_V'\big)^{T_{A_1}} 
      = \frac{k}{k'} \left(\Phi_I^{T_{A_1}} + k \Phi_V^{T_{A_1}}\right),
    \end{aligned}
  \end{equation}
  since all coefficients are nonnegative. The linear constraints are obviously 
  satisfied as we consider $\Phi_{A_1B_1}' = \Phi_{A_1B_1}$.
\end{proof}
A similar statement also holds in the real case.
\begin{observation}
  A feasible point $\Phi_{A_1B_1} = k^2 \Phi_I + k \Phi_V + k \Phi_\phi$ of the 
  SDP in Eq.~\eqref{eq:s2Tr} with parameter $k \ge 1$ is also a feasible point 
  $\Phi_{A_1B_1} = k'^2 \Phi_I' + k' \Phi_V' + k' \Phi_\phi'$ of the SDP with 
  parameter $k' \ge k$.
\end{observation}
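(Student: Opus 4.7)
The plan is to mimic the linear-reparametrization argument used in Observation~\ref{obs:contrank}, but now inside the three-dimensional Brauer-algebra span rather than the two-dimensional group-algebra span. I would start from the ansatz $\Phi_I' = a\Phi_I + b\Phi_V + c\Phi_\phi$ with real scalars $a,b,c$, and then derive $\Phi_V' = V_{A_1B_1}\Phi_I'$ and $\Phi_\phi' = (\Phi_V')^{T_{A_1}}$ from the defining equations of Eq.~\eqref{eq:s2Tr}. Feasibility at level $k$ supplies the useful identities $V_{A_1B_1}\Phi_V = \Phi_I$, $V_{A_1B_1}\Phi_\phi = \Phi_\phi$, $\Phi_I^{T_{A_1}} = \Phi_I$, $\Phi_V^{T_{A_1}} = \Phi_\phi$, and $\Phi_\phi^{T_{A_1}} = \Phi_V$, so that $\Phi_V' = b\Phi_I + a\Phi_V + c\Phi_\phi$ and $\Phi_\phi' = c\Phi_I + b\Phi_V + a\Phi_\phi$; the constraint $(\Phi_I')^{T_{A_1}} = \Phi_I'$ then forces $b = c$.

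Having reduced the ansatz to two parameters, I would impose $k'^2\Phi_I' + k'\Phi_V' + k'\Phi_\phi' = k^2\Phi_I + k\Phi_V + k\Phi_\phi$, which preserves $\Phi_{A_1B_1}$ and hence the trace normalization and the objective. Because the left-hand side has equal coefficients on $\Phi_V$ and $\Phi_\phi$ by symmetry, this becomes a $2\times 2$ linear system in $a,b$, with the unique solution
\begin{equation*}
  a = \frac{k\bigl[k(k'+1)-2\bigr]}{k'(k'-1)(k'+2)}, \qquad
  b = c = \frac{k(k'-k)}{k'(k'-1)(k'+2)},
\end{equation*}
which correctly degenerates to $a=1,\,b=0$ at $k'=k$; the boundary $k'=k=1$ (where the Brauer basis collapses and the denominator vanishes) is handled by the identity assignment.

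The main obstacle will be verifying the three positivity constraints $\Phi_I' \pm \Phi_V' \ge 0$ and $\Phi_I' + \Phi_V' + k'\Phi_\phi' \ge 0$. Two of them simplify to scalar multiples of the originals: a direct computation gives $\Phi_I' - \Phi_V' = \frac{k(k-1)}{k'(k'-1)}(\Phi_I - \Phi_V)$ and $\Phi_I' + \Phi_V' + k'\Phi_\phi' = \frac{k}{k'}(\Phi_I + \Phi_V + k\Phi_\phi)$, both manifestly positive semidefinite for $k'\ge k\ge 1$. For $\Phi_I' + \Phi_V'$ I would rewrite it as a conic combination of the two positive-semidefinite operators $\Phi_I + \Phi_V$ and $\Phi_I + \Phi_V + k\Phi_\phi$; the algebraic bookkeeping should reveal the coefficients $\frac{(k-1)(k+2)}{(k'-1)(k'+2)}$ and $\frac{2(k'-k)}{k'(k'-1)(k'+2)}$, both nonnegative under $k'\ge k\ge 1$.

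Finally, the remaining equality constraints are automatic: the trace constraint follows from preservation of $\Phi_{A_1B_1}$, and a parallel calculation shows $k'\Phi_I' + \Phi_V' + \Phi_\phi' = \frac{k}{k'}(k\Phi_I + \Phi_V + \Phi_\phi)$, so the homogeneous $\Lambda$-linear constraint rescales by a strictly positive factor and therefore still holds. Together these verifications establish that $(\Phi_I',\Phi_V',\Phi_\phi')$ is feasible at level $k'$ and realizes the same $\Phi_{A_1B_1}$.
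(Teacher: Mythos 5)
Your proposal is correct and takes essentially the same route as the paper: the same ansatz $\Phi_I' = a\Phi_I + b\Phi_V + c\Phi_\phi$ with the constraints of Eq.~\eqref{eq:s2Tr} pinning down the coefficients (your $a$ and $b=c$ coincide with the paper's expression for $\Phi_I'$), followed by the identical three positivity decompositions with manifestly nonnegative coefficients and the same rescaling $k'\Phi_I'+\Phi_V'+\Phi_\phi'=\frac{k}{k'}\left(k\Phi_I+\Phi_V+\Phi_\phi\right)$ for the linear constraint. The only blemish is a harmless transcription slip before you impose $b=c$: applying $(\cdot)^{T_{A_1}}$ to $\Phi_V'$ gives $\Phi_\phi' = b\Phi_I + c\Phi_V + a\Phi_\phi$, not $c\Phi_I + b\Phi_V + a\Phi_\phi$, which is immaterial once $b=c$ holds.
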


\begin{proof}
  In this case, from  $\Phi'_V=V\Phi'_I$, $\Phi'_\phi=(\Phi'_V)^{T_{A_1}}$, and
  $k'^2\Phi_I'+k'\Phi_V'+k'\Phi_\phi'=k^2\Phi_I+k\Phi_V+k\Phi_\phi$, we obtain 
  that
  \begin{equation}
    \begin{aligned}
      \Phi'_I=&\frac{k(kk'+k-2)}{k'(k'+2)(k'-1)}\Phi_I\\
      +&\frac{k(k'-k)}{k'(k'+2)(k'-1)}\Phi_V\\
      +&\frac{k(k'-k)}{k'(k'+2)(k'-1)}\Phi_\phi.
    \end{aligned}
  \end{equation}
  Analogous to the proof of Observation~\ref{obs:contrank}, it is 
  straightforward to verify that the coefficients in the following equalities 
  are all nonnegative,
  \begin{equation}
    \begin{aligned}
      &\Phi'_I+\Phi'_V=\frac{(k+2)(k-1)}{(k'+2)(k'-1)}
      \Big( \Phi_I+\Phi_V \Big),\\
      &\qquad+\frac{2(k'-k)}{k'(k'+2)(k'-1)}
      \Big( \Phi_I+\Phi_V+k\Phi_\phi\Big)\\
      &\Phi'_I-\Phi'_V=\frac{k(k-1)}{k'(k'-1)}
      \Big( \Phi_I-\Phi_V \Big),\\
      &\Phi'_I+\Phi'_V+k'\Phi'_\phi=\frac{k}{k'}
      \Big(\Phi_I+\Phi_V+k\Phi_\phi \Big).\\
    \end{aligned}
  \end{equation}
  It is also obvious that $(\Phi'_I)^{T_{A_1}}=\Phi'_I$ and 
  $V\Phi'_\phi=\Phi'_\phi$. Hence, the feasibility follows.
\end{proof}

Thus, the set of feasible points grows monotonically with continuous
$k$, and hence, the same is true for the objective value. Apart from the interpretation
as a continuous rank, this also helps in preventing invalid conclusions because
of numerical errors, since parameters $k$ can be sampled in a region around the considered
rank.

\section{Proof of Theorem~\ref{thm:sepOptIneq}}\label{app:ineqcons}

\begin{manualtheorem}{\ref{thm:sepOptIneq}}
  For $\dF=\dC$, let $\xi$ be the solution of the rank-constrained SDP in 
  Eq.~\eqref{eq:SDPRankIneq}. Then, for any $N$, $\xi$ is upper bounded by the 
  solution $\xi_N$ of the following SDP hierarchy
  \begin{alignat}{2}
    &\maxover[\Phi_{AB\cdots Z}]  && \Tr(\tX_A\otimes\I_{B\cdots Z}\Phi_{AB\cdots 
    Z})\notag\\
    \label{eq:symExtOptIneqA}
    &~\subto && \Phi_{AB\cdots Z}\ge 0,~\Tr(\Phi_{AB\cdots Z})=1,\\
    &       && P_N^+\Phi_{AB\cdots Z}P_N^+=\Phi_{AB\cdots Z},
    \notag\\
    &       && \tL\otimes\id_{B\cdots Z}(\Phi_{AB\cdots Z}) \le 
    Y\otimes\Tr_A(\Phi_{AB\cdots Z}).\notag
  \end{alignat}
   Furthermore, the SDP hierarchy is complete, 
  i.e., $\xi_{N+1}\le\xi_N$ and $\lim_{N\to+\infty}\xi_N=\xi$.
\end{manualtheorem}

We will denote the set of pure states embedded in the space of Hermitian 
operators by $\Omega (d) = \{\ketbra{\varphi}{\varphi} \mid 
\braket{\varphi}{\varphi}=1\}$. In this appendix, by means of this embedding, 
we also identify the symbol $\varphi$ as the operator 
$\ketbra{\varphi}{\varphi}$, and analogously for $\psi$. Note that $\Omega(d)$ 
is a compact metric space.  Topologically it is also a separable space, i.e., 
it has a countable dense subset.

Let us start with restating the quantum de Finetti theorem for infinite 
sequences explicitly. Although the results are known in the literature, we 
repeat here a simple proof for completeness. This proof is based on 
the proof of the quantum de Finetti theorem for finite sequences in 
Lemma~\ref{lem:deFinetti}~\cite[Theorem~II.8]{Christandl.etal2007}.

\begin{lemma}[quantum de Finetti Theorem for bosonic sequences]
Let $\{\rho_n\}_{n=0}^{\infty}$ be a sequence of bosonic extensions of density 
operators over $[\dC^d]^{\otimes n}$, i.e., $\rho_0=1$, $\rho_k 
= \mathrm{Tr}_{n-k} [\rho_n]$ for all $n \ge k \ge 0$, and the $\rho_n$ are in the 
symmetric subspace of $[\dC^d]^{\otimes n}$. Then, there exists a Borel 
probability measure $\mu$ over $\Omega(d)$ such that
\begin{equation}
\rho_n = \int\md \mu (\varphi) \ketbra{\varphi}{\varphi}^{\otimes n}
\end{equation}
for all $n=0,1,2,\ldots$
\label{th:definetti1}
\end{lemma}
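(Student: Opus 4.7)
The plan is to combine the finite-$N$ quantum de Finetti theorem (Lemma~\ref{lem:deFinetti}) with a compactness argument on the space of Borel probability measures on $\Omega(d)$. Since $\Omega(d)$ is a compact metrizable space, this measure space is itself compact and metrizable in the weak-$*$ topology, by the Banach-Alaoglu/Prokhorov theorem. For each $n\ge 1$ I would introduce the affine map
\begin{equation}
  F_n(\nu) := \int \ketbra{\varphi}{\varphi}^{\otimes n}\, \md \nu(\varphi),
\end{equation}
which sends a probability measure $\nu$ on $\Omega(d)$ to a density operator on $[\dC^d]^{\otimes n}$. Since $\varphi\mapsto\Tr\!\big[X\ketbra{\varphi}{\varphi}^{\otimes n}\big]$ is continuous and bounded on $\Omega(d)$ for every bounded operator $X$, the map $F_n$ is continuous in the weak-$*$ topology, and $K_n:=F_n^{-1}(\{\rho_n\})$ is therefore a closed subset of the measure space.

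The first step is to show $K_n\ne\emptyset$ for each $n$. For any $N\ge n$, Lemma~\ref{lem:deFinetti} applied to $\rho_N$ with $\ell=n$ (using the hypothesis $\Tr_{N-n}\rho_N=\rho_n$) produces a discrete probability measure $\nu_N$ on $\Omega(d)$ with $\norm{F_n(\nu_N)-\rho_n}\le 4nd/N$. By compactness, extract a weakly convergent subsequence $\nu_{N_k}\to\nu^{(n)}$; continuity of $F_n$ then gives $F_n(\nu^{(n)})=\rho_n$, so $\nu^{(n)}\in K_n$.

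The key step exploits compatibility of the whole sequence: if $\nu\in K_n$, then for any $m<n$ one has $F_m(\nu)=\Tr_{n-m}F_n(\nu)=\Tr_{n-m}\rho_n=\rho_m$, hence $\nu\in K_m$. Thus $K_1\supseteq K_2\supseteq\cdots$ is a decreasing chain of nonempty closed subsets of a compact space, and by the finite intersection property $\bigcap_{n\ge 1}K_n\ne\emptyset$. Any $\mu$ in this intersection satisfies $\rho_n=F_n(\mu)=\int\ketbra{\varphi}{\varphi}^{\otimes n}\md\mu(\varphi)$ for every $n$, which is the desired representation.

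The subtle point to watch is that the measures $\nu^{(n)}$ produced in the second paragraph may differ for different $n$, so one cannot simply diagonalize and take a common weak-$*$ limit of them. What rescues the argument is not uniqueness of $\nu^{(n)}$, but the nesting $K_n\supseteq K_{n+1}$ that follows from compatibility of $\{\rho_n\}$; combined with compactness of the measure space, this nesting delivers a single $\mu$ that works simultaneously at every level.
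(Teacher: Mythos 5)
Your proof is correct and takes essentially the same route as the paper's: the finite quantum de Finetti bound of Lemma~\ref{lem:deFinetti} combined with weak-$*$ sequential compactness of the space of Borel probability measures on $\Omega(d)$. The only difference is packaging --- where the paper extracts limit points twice (first in $N$ to obtain a measure $\mu_n$ exactly reproducing $\rho_n$, then a limit point of $\{\mu_n\}$), you run the finite intersection property on the nested closed sets $K_1 \supseteq K_2 \supseteq \cdots$, an equivalent argument whose merit is that it makes explicit the compatibility step $F_m(\nu) = \Tr_{n-m} F_n(\nu) = \rho_m$ that the paper leaves implicit when asserting its final limit point represents every $\rho_n$ simultaneously.
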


\begin{proof}
According to Lemma~\ref{lem:deFinetti}, for a given $N$ and $n$, there is 
a Borel probability measure (in fact, in this case the measure can be discrete) 
$\mu_{(N,n)}$ over $\Omega(d)$ such that
\begin{equation}
\norm{\rho_n - \int\md \mu_{(N,n)} (\varphi) \ketbra{\varphi}{\varphi}^{\otimes 
n}}  \le \frac{4 d n}{N}.  \end{equation}
Consider the sequence of Borel probability measures 
$\{\mu_{(N,n)}\}_{N=n+1}^{\infty}$.  Since the space of all Borel probability 
measures over $\Omega(d)$ is sequentially compact in the weak 
topology~\cite[Theorem 8.9.3]{Bogachev2007}, this sequence has at least one 
limit point $\mu_n$. Thus, we have that
\begin{equation}
\rho_n = \int \md \mu_n (\varphi) \ketbra{\varphi}{\varphi}^{\otimes n}.  
\end{equation}
Let $\mu$ in turn be a limit point of the sequence $\{\mu_n\}_{n=1}^{\infty}$, then one finds 
\begin{equation}
\rho_n = \int \md \mu (\varphi) \ketbra{\varphi}{\varphi}^{\otimes n}
\end{equation}
for all $n=0,1,2,\ldots$
\end{proof}

The measure that appears in Lemma~\ref{th:definetti1} can also be shown to be 
unique -- a fact that is needed for our subsequent argument. This is 
a consequence of the following lemma.

\begin{lemma}
The self-adjoint commutative algebra of continuous functions over $\Omega (d)$ 
generated by functions of the form $\Tr[\ketbra{\psi}{\psi}\,\cdot\,]: \Omega 
(d) \to \dR$ for all $\psi \in \Omega(d)$ and the constant functions is dense 
in the space $C[\Omega(d)]$ of continuous functions on $\Omega(d)$.
\label{th:dense}
\end{lemma}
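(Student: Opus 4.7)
The plan is to recognize this as a direct application of the Stone--Weierstrass theorem, since the statement is exactly of the form required by that theorem.

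First, I would establish that $\Omega(d)$ is a compact Hausdorff space. This is routine: $\Omega(d)$ is the set of rank-one projectors with trace one inside the real finite-dimensional vector space of Hermitian operators on $\dC^d$; it is cut out by the (continuous) polynomial conditions $\rho^2 = \rho$ and $\Tr(\rho) = 1$, hence closed, and it lies inside the bounded set $\{\rho : \Tr(\rho^2)=1\}$, hence bounded and therefore compact. Metrizability (and thus Hausdorffness) is inherited from the ambient vector space.

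Next, let $\mathcal{A}$ denote the self-adjoint unital subalgebra of $C[\Omega(d)]$ generated by the evaluation functions $f_\psi(\varphi) := \Tr[\ketbra{\psi}{\psi}\varphi]$ for $\psi \in \Omega(d)$. By construction $\mathcal{A}$ contains the constants and is closed under products and real (or complex) linear combinations, and the $f_\psi$ are real-valued on $\Omega(d)$, so $\mathcal{A}$ is closed under complex conjugation. It remains to verify that $\mathcal{A}$ separates points of $\Omega(d)$. Given two distinct states $\varphi_1 \neq \varphi_2$ in $\Omega(d)$, I take $\psi = \varphi_1$: then $f_{\varphi_1}(\varphi_1) = 1$, whereas $f_{\varphi_1}(\varphi_2) = |\braket{\varphi_1}{\varphi_2}|^2 < 1$, since the Cauchy--Schwarz inequality is strict for distinct unit vectors up to a global phase (and that global phase does not affect $\ketbra{\varphi_1}{\varphi_1}$ or $\ketbra{\varphi_2}{\varphi_2}$, so $\varphi_1 \neq \varphi_2$ as projectors forces $|\braket{\varphi_1}{\varphi_2}|<1$). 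Hence $f_{\varphi_1}$ distinguishes $\varphi_1$ and $\varphi_2$.

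With all hypotheses of the Stone--Weierstrass theorem verified, I conclude that $\mathcal{A}$ is dense in $C[\Omega(d)]$. There is no serious obstacle here; the only point that requires a moment's care is the separation-of-points step, where one must remember that the elements of $\Omega(d)$ are projectors and that equality of projectors means equality up to global phase of the underlying vectors, so that $|\braket{\varphi_1}{\varphi_2}|<1$ whenever $\ketbra{\varphi_1}{\varphi_1} \neq \ketbra{\varphi_2}{\varphi_2}$.
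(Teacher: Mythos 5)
Your proof is correct and follows essentially the same route as the paper: both invoke the Stone--Weierstrass theorem (cited in the paper as Stone's theorem) and separate two distinct points $\ketbra{\varphi_1}{\varphi_1}\neq\ketbra{\varphi_2}{\varphi_2}$ using the function $\Tr[\ketbra{\varphi_1}{\varphi_1}\,\cdot\,]$. Your additional verifications (compactness and metrizability of $\Omega(d)$, and the explicit Cauchy--Schwarz argument that $\abs{\braket{\varphi_1}{\varphi_2}}<1$ for distinct projectors) are correct details that the paper leaves implicit.
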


\begin{proof}
The lemma is a direct consequence of Stone's theorem~\cite[Theorem 
4.3.4]{Pedersen1989}, which states that a self-adjoint subalgebra of 
$C[\Omega(d)]$ containing the constants and separating points in $\Omega(d)$ is 
uniformly dense in $C[\Omega(d)]$. Indeed, the algebra contains the constants 
by construction. It also separates points in $\Omega$ since for any two 
distinguished points $\ketbra{\varphi_1}{\varphi_1},  
\ketbra{\varphi_2}{\varphi_2} \in C[\Omega(d)]$, the function $\Tr 
[\ketbra{\varphi_1}{\varphi_1}\,\cdot\,]$ separates them.
\end{proof}

\begin{corollary}
\label{th:tensor_moments}
Let $\mu$ be a Borel measure on $\Omega(d)$, then the set of tensor moments
\begin{equation}
\int \md \mu (\varphi) \ketbra{\varphi}{\varphi}^{\otimes n}
\label{eq:tensor_moments}
\end{equation}
for $n \in \dN$ specify $\mu$ uniquely. As a consequence, the measure $\mu$ 
that appears in Lemma~\ref{th:definetti1} is unique.
\end{corollary}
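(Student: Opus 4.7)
The plan is to argue in two steps: first, translate the knowledge of all tensor moments into knowledge of integrals against a dense subalgebra of $C[\Omega(d)]$, and second, invoke the Riesz-Markov representation theorem on the compact metric space $\Omega(d)$ to conclude uniqueness of $\mu$.

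First, for any $\ket{\psi_1},\dots,\ket{\psi_n}\in\Omega(d)$, I would pair the tensor moment in Eq.~\eqref{eq:tensor_moments} with the product operator $\ketbra{\psi_1}{\psi_1}\otimes\cdots\otimes\ketbra{\psi_n}{\psi_n}$, obtaining
\begin{equation}
\int\md\mu(\varphi)\prod_{i=1}^n\abs{\braket{\psi_i}{\varphi}}^2.
\end{equation}
Since pure product operators span the full operator space on $[\dC^d]^{\otimes n}$, knowledge of all tensor moments (including the $n=0$ case, which pins down the total mass) is equivalent to knowledge of $\int f\md\mu$ for every $f$ in the self-adjoint commutative algebra $\mathcal{A}$ generated by the rank-one trace functions $\Tr[\ketbra{\psi}{\psi}\,\cdot\,]$ together with the constants.

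Next, Lemma~\ref{th:dense} states that $\mathcal{A}$ is uniformly dense in $C[\Omega(d)]$. A short $\varepsilon$-approximation argument will upgrade equality of integrals from $\mathcal{A}$ to all of $C[\Omega(d)]$: if two finite Borel measures $\mu_1,\mu_2$ agree on $\mathcal{A}$ and $f\in C[\Omega(d)]$, choosing $g\in\mathcal{A}$ with $\norm{f-g}_\infty<\varepsilon$ bounds $\abs{\int f\md\mu_i-\int g\md\mu_i}\le\varepsilon\cdot\mu_i[\Omega(d)]$ and forces $\int f\md\mu_1=\int f\md\mu_2$. Because $\Omega(d)$ is a compact Hausdorff (indeed metric) space, the Riesz-Markov theorem then implies that a regular Borel measure is uniquely specified by its associated positive linear functional on $C[\Omega(d)]$, yielding the first assertion of the corollary.

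The second assertion will follow immediately: if two Borel measures $\mu,\mu'$ both represent the bosonic sequence $\{\rho_n\}$ appearing in Lemma~\ref{th:definetti1}, then their $n$-th tensor moments agree with $\rho_n$ for every $n$, hence with each other, and the first part forces $\mu=\mu'$. I do not anticipate a substantive obstacle here; the analytic work is already carried out by Lemma~\ref{th:dense}, and the remainder is the standard extension from a dense subalgebra to $C[\Omega(d)]$ combined with the Riesz-Markov theorem.
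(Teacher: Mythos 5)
Your proposal is correct and takes essentially the same route as the paper's own proof: the tensor moments, paired with product operators, determine the integral of every function in the subalgebra of Lemma~\ref{th:dense}, and the Riesz--Markov correspondence on the compact metric space $\Omega(d)$ then forces uniqueness of $\mu$. The only difference is cosmetic---you make explicit the standard $\varepsilon$-approximation step from the dense subalgebra to all of $C[\Omega(d)]$, which the paper leaves implicit.
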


\begin{proof}
The first step in the proof is to use the Riesz representation, or 
Riesz--Markov theorem~\cite[Theorem 7.10.4]{Bogachev2007}, which establishes 
a one-to-one correspondence between a Borel measure $\mu$ over the compact 
metric space $\Omega(d)$ and a linear functional over the continuous functions 
$L: C[\Omega(d)] \to \dR$ by $L(g) = \int \md \mu (\varphi) g(\varphi)$ for $g 
\in C[\Omega(d)]$. Now the set of tensor moments in Eq.~\eqref{eq:tensor_moments} in 
fact specifies the values of the linear functionals  $L$ on the whole 
subalgebra generated by the constants and functions of the form 
$\Tr[\ketbra{\psi}{\psi}\,\cdot\,]:\Omega (d)\to\dR$ for all 
$\ketbra{\psi}{\psi} \in \Omega(d)$.  Since this subalgebra is dense in 
$C[\Omega(d)]$ by Lemma~\ref{th:dense}, these values uniquely specify the 
linear functional $L$, and thus the Borel measure $\mu$.
\end{proof}

\begin{corollary}
Let $\mu$ be a Borel probability measure over $\Omega(d)$ and $f$ be 
a continuous function over $\Omega(d)$ such that
\begin{equation}
\int \md \mu(\varphi) f(\varphi) \ketbra{\varphi}{\varphi}^{\otimes n} \ge 0
\label{eq:positive_moments}
\end{equation}
for all $n \in \dN$. Then, $f$ is nonnegative almost everywhere with respect to 
$\mu$.
\label{th:positive_moments}
\end{corollary}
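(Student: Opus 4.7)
My plan is to show that the signed measure $\nu := f\mu$ is in fact a positive Borel measure on $\Omega(d)$; the conclusion will then be immediate, since on $E := \{\varphi : f(\varphi)<0\}$ one has $0\le \nu(E) = \int_E f\md\mu \le 0$, forcing $\mu(E)=0$. To establish $\nu\ge 0$, by the Riesz--Markov theorem it suffices to prove that $\int \alpha\md\nu \ge 0$ for every continuous $\alpha:\Omega(d)\to [0,\infty)$.

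I would obtain such scalar inequalities by smearing the operator-valued hypothesis against pure-state test functions. Tracing \eqref{eq:positive_moments} with $\ketbra{\psi}{\psi}^{\otimes n}$ gives
\[
G_n(\psi) \;:=\; \int \md\mu(\varphi)\,f(\varphi)\,|\braket{\psi}{\varphi}|^{2n} \;\ge\; 0
\qquad \forall\, \psi\in\Omega(d),\ \forall\, n\in\dN.
\]
Given a continuous $\alpha\ge 0$, multiplying $G_n(\psi)$ by $\dim(P_n^+)\alpha(\psi)$, integrating against the uniform measure $\md\psi$ on $\Omega(d)$, and applying Fubini's theorem yields
\[
0 \;\le\; \int\md\mu(\varphi)\,f(\varphi)\,(T_n\alpha)(\varphi),
\qquad (T_n\alpha)(\varphi) \;:=\; \dim(P_n^+)\!\int\!\alpha(\psi)\,|\braket{\psi}{\varphi}|^{2n}\md\psi,
\]
where, by the identity $\int|\braket{\psi}{\varphi}|^{2n}\md\psi = 1/\dim(P_n^+)$ already used in the main text, the kernel $\dim(P_n^+)|\braket{\psi}{\varphi}|^{2n}\md\psi$ is a probability density in $\psi$ for each $\varphi$.

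The decisive step is the limit $n\to\infty$. Because the denominator $1/\dim(P_n^+)$ decays only polynomially while the mass of $|\braket{\psi}{\varphi}|^{2n}$ outside any $\varepsilon$-neighborhood of $\varphi$ decays exponentially in $n$ --- exactly the argument the paper uses to justify \eqref{eq:deltameasure} --- one obtains $(T_n\alpha)(\varphi)\to\alpha(\varphi)$ pointwise. The $SU(d)$-invariance of the kernel combined with the uniform continuity of $\alpha$ on the compact space $\Omega(d)$ upgrades this to uniform convergence, so dominated convergence lets me pass the limit under the $\mu$-integral and conclude $\int \alpha(\varphi)\,f(\varphi)\md\mu(\varphi)\ge 0$ for every continuous $\alpha\ge 0$, yielding $\nu\ge 0$. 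The main obstacle in this plan is the uniformity of the Berezin-type limit $T_n\alpha\to\alpha$, but this reduces precisely to the delta-concentration property already established in the paper's main text.
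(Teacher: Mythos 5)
Your proposal is correct, but it takes a genuinely different route from the paper's proof. The paper identifies the signed measure $f\md\mu$ with a positive measure through its tensor moments: after splitting into the cases $\int f\md\mu=0$ and $\int f\md\mu>0$, it normalizes $\rho_n=\int \md\mu(\varphi)\,f(\varphi)\ketbra{\varphi}{\varphi}^{\otimes n}\big/\int f\md\mu$, invokes the quantum de Finetti theorem for bosonic sequences (Lemma~\ref{th:definetti1}) to produce a positive representing measure $\nu$, and then uses the uniqueness of a measure with prescribed tensor moments (Corollary~\ref{th:tensor_moments}, resting on Stone's theorem) to conclude that $f\md\mu\big/\int f\md\mu=\md\nu\ge 0$. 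You instead prove positivity of $f\md\mu$ directly by duality against test functions, smearing the hypothesis with $\ketbra{\psi}{\psi}^{\otimes n}$ and letting the Berezin-type probability kernel $\dim(P_n^+)\abs{\braket{\psi}{\varphi}}^{2n}\md\psi$ concentrate at $\varphi$ via exactly the polynomial-versus-exponential estimate the paper uses informally to justify Eq.~\eqref{eq:deltameasure}; in effect you rigorize the main text's heuristic delta-function argument rather than replacing it with the de Finetti/moment-uniqueness machinery, and your argument is self-contained modulo standard facts (Fubini, and the regularity of finite signed Borel measures on a compact metric space, which converts $\int\alpha\md\nu\ge0$ for all continuous $\alpha\ge0$ into $\nu\ge0$). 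The paper's route is shorter only because Lemma~\ref{th:definetti1} and Corollary~\ref{th:tensor_moments} are developed anyway for the rest of Appendix~\ref{app:ineqcons}; yours avoids de Finetti entirely for this statement. One simplification worth noting: the uniformity of $T_n\alpha\to\alpha$, which you flag as the main obstacle, is not actually needed — since $\norm{T_n\alpha}_\infty\le\norm{\alpha}_\infty$ and pointwise convergence already follows from the tail bound $\dim(P_n^+)(1-\varepsilon)^n\to 0$ at each fixed $\varphi$, dominated convergence with respect to the finite measure $\abs{f}\md\mu$ lets the limit pass; that said, your uniformity claim is also correct, because the tail bound is independent of $\varphi$ by unitary invariance of $\md\psi$ and $\alpha$ is uniformly continuous on the compact space $\Omega(d)$.
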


\begin{proof}
We consider two cases:

(i) Suppose $\int \md \mu (\varphi) f(\varphi) = 0$, then the operators $\int 
\md \mu(\varphi) f(\varphi) \ketbra{\varphi}{\varphi}^{\otimes n}$ in 
Eq.~\eqref{eq:positive_moments} are traceless. The positivity condition implies 
that, in this case, equality holds in Eq.~\eqref{eq:positive_moments} for all 
$n\in\dN$. By Corollary~\ref{th:tensor_moments}, the measure $f(\varphi) \md 
\mu(\varphi)$ vanishes, and therefore, $f(\varphi)=0$ almost everywhere with respect 
to $\mu$.

(ii) Suppose $\int \md \mu (\varphi) f(\varphi) >0$. Consider the sequence of 
states $\rho_n= \int \md \mu(\varphi) f(\varphi) 
\ketbra{\varphi}{\varphi}^{\otimes n}/ \int \md \mu(\varphi) f(\varphi)$, which 
clearly satisfy the condition of the quantum de Finetti theorem in 
Lemma~\ref{th:definetti1}. Therefore, there is a unique Borel probability 
measure $\nu$ over $\Omega(d)$ such that
\begin{equation}
\rho_n = \int \md \nu (\varphi) \ketbra{\varphi}{\varphi}^{\otimes n}.
\end{equation}
Thus, Corollary~\ref{th:tensor_moments} implies that one can identify $d \mu 
(\varphi) f(\varphi)/\int \md \mu(\varphi) f(\varphi)= \md \nu (\varphi)$.  In 
particular, $f(\varphi) \ge 0$ almost everywhere with respect to $\mu$.
\end{proof}

To extend this result to operator-valued functions, we need the following 
lemma.
\begin{lemma}
Let $X$ be a separable topological space, $(Y,\mu)$ be a measure space. Let 
$\phi: X \times Y \to \dR$ be a function such that $\phi(\,\cdot\,,y)$ is 
continuous for any $y \in Y$. If for any fixed $x \in X$, $\phi(x, y ) \ge 0$ 
for almost all $y \in Y$, then for almost all $y \in Y$, it holds that 
$\phi(x,y) \ge 0$ for all $x$.
\label{th:topo_measure}
\end{lemma}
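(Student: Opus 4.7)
The plan is to exploit the separability of $X$ to reduce the uncountable family of nonnegativity conditions to a countable one, and then use continuity in the first argument to extend back to all of $X$. Since $X$ is separable, I would fix a countable dense subset $D=\{x_1,x_2,\ldots\}\subseteq X$. By hypothesis, for each $n$ there is a $\mu$-null set $N_n\subseteq Y$ such that $\phi(x_n,y)\ge 0$ for every $y\in Y\setminus N_n$.

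Next, I would take the union $N=\bigcup_{n\ge 1}N_n$. A countable union of null sets is null, so $\mu(N)=0$. On the complementary set $E=Y\setminus N$ (which has full measure), every $y\in E$ satisfies $\phi(x_n,y)\ge 0$ simultaneously for all $n\in\dN$. This is the key reduction; without separability of $X$ the union might fail to be measurable or null.

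The final step is the density-plus-continuity extension. Fix any $y\in E$ and any $x\in X$. By density of $D$, pick a sequence $x_{n_k}\to x$ in $X$. Since $\phi(\,\cdot\,,y)$ is continuous, $\phi(x,y)=\lim_{k\to\infty}\phi(x_{n_k},y)$, and each term in the sequence is nonnegative, so the limit is nonnegative. Hence $\phi(x,y)\ge 0$ for all $x\in X$ and all $y\in E$, which is the desired conclusion.

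The argument is essentially routine; the only thing one must be careful about is that $N$ is indeed measurable with $\mu(N)=0$, which is automatic for a countable union of $\mu$-null sets (completing the measure if necessary, which is harmless). No topological assumptions on $X$ beyond separability are required, and continuity of $\phi(\,\cdot\,,y)$ is used only in the last step. Thus separability plays the essential role: it is what allows one to swap the quantifier order ``for all $x$, almost all $y$'' into ``almost all $y$, for all $x$''.
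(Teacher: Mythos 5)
Your argument is essentially the paper's own proof: both reduce the problem to a countable dense subset $\{x_n\}\subseteq X$, observe that the exceptional set is a countable union of null sets (the paper writes $A=\bigcup_i B_i$ with $B_i=\{y:\phi(x_i,y)<0\}$ and invokes countable subadditivity; your union $N=\bigcup_n N_n$ is the identical step), and then use continuity of $\phi(\,\cdot\,,y)$ to pass from the dense set back to all of $X$.

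One caveat on your final step. In a general separable topological space a point $x$ need not be the limit of any \emph{sequence} drawn from the dense set, since first countability can fail (e.g.\ in the Stone--\v{C}ech compactification $\beta\dN$, which is separable but whose remainder points are not sequential limits from $\dN$); so ``pick a sequence $x_{n_k}\to x$'' is not justified at the stated level of generality. The repair is immediate and sequence-free: if $\phi(x,y)<0$ for some $x$ and some $y\in E$, continuity of $\phi(\,\cdot\,,y)$ gives an open neighborhood $U$ of $x$ on which $\phi(\,\cdot\,,y)<0$, and density puts some $x_n$ in $U$, contradicting $\phi(x_n,y)\ge 0$. This neighborhood formulation is how the paper's terser claim (``$\phi(x,y)\ge 0$ for all $x$ is equivalent to $\phi(x_i,y)\ge 0$ for all $i$'') should be read. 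In the paper's actual application the lemma is used with $X=\Omega(D)$, a compact metric space, which is first countable, so there your sequence argument is also perfectly valid; the issue only concerns the lemma in its full stated generality.
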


\begin{proof}
Because $X$ is a separable space, it has a countable dense subset 
$\{x_i\}_{i\in\dN}$.  Hence, the continuity of $\phi(\,\cdot\,,y)$ implies that 
for a given $y$, $\phi(x,y) \ge 0$ for all $x$ is equivalent to $\phi(x_i,y) 
\ge 0$ for all $i\in\dN$. Let $A=\{y \in Y: \mbox{$\phi(x_i,y) < 0$ for some 
$i$}\}$ and $B_i = \{y \in Y: \phi(x_i,y) < 0\}$, then it is clear that $A 
= \cup_{i=1}^{\infty} B_i$.  Therefore $\mu (A) \le \sum_{i=1}^{\infty} \mu 
(B_i) = 0$ by subadditivity.  Thus $\mu(A)=0$, and it follows that for almost 
all $y \in Y$, we have that $\phi(x,y) \ge 0$ for all $x$.
\end{proof}

\begin{corollary}
Let $\mu$ be a Borel probability measure over $\Omega(d)$ and $F$ be 
a continuous function over $\Omega(d)$ with Hermitian operators 
acting on $\dC^D$ as values such that
\begin{equation}
\int \md \mu(\varphi) F(\varphi) \otimes \ketbra{\varphi}{\varphi}^{\otimes N} 
\ge 0
\label{eq:tensor_positive_moments}
\end{equation}
for all $N \in \dN$. Then, $F$ is almost everywhere positive semidefinite with respect to 
$\mu$.
\label{th:tensor_positive_moments}
\end{corollary}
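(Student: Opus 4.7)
The plan is to reduce the operator-valued statement to the scalar-valued statement of Corollary~\ref{th:positive_moments}, applied pointwise in a test vector $\ket{\chi}\in\dC^D$, and then use Lemma~\ref{th:topo_measure} to exchange the quantifiers ``for almost all $\varphi$'' and ``for all $\ket{\chi}$''.

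First, for each fixed unit vector $\ket{\chi}\in\dC^D$, define the scalar-valued continuous function $f_\chi(\varphi)=\bra{\chi}F(\varphi)\ket{\chi}$ on $\Omega(d)$. Taking the sandwich of the assumed operator inequality \eqref{eq:tensor_positive_moments} between $\ket{\chi}\bra{\chi}\otimes\I$ on both sides (equivalently, tracing against $\ket{\chi}\bra{\chi}\otimes\I^{\otimes N}$ in an operator-valued sense), I obtain
\begin{equation}
  \int \md\mu(\varphi)\, f_\chi(\varphi)\, \ket{\varphi}\bra{\varphi}^{\otimes N}\ge 0
\end{equation}
for every $N\in\dN$. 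Hence Corollary~\ref{th:positive_moments} applies to $f_\chi$, giving that $f_\chi\ge 0$ almost everywhere with respect to $\mu$, i.e., for each fixed $\ket{\chi}$ there exists a $\mu$-null set $E_\chi$ outside of which $\bra{\chi}F(\varphi)\ket{\chi}\ge 0$.

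Second, I would invoke Lemma~\ref{th:topo_measure} with $X$ the set of unit vectors in $\dC^D$ (which is a compact, hence separable, metric space), $(Y,\mu)=(\Omega(d),\mu)$, and $\phi(\ket{\chi},\varphi)=\bra{\chi}F(\varphi)\ket{\chi}$.  The continuity hypothesis in Lemma~\ref{th:topo_measure} is the continuity in $\ket{\chi}$ for each fixed $\varphi$, which holds because $\phi$ is a polynomial in the entries of $\ket{\chi}$ (with coefficients depending on $\varphi$). The pointwise almost-everywhere nonnegativity in $\varphi$ established in the previous step is exactly the hypothesis of Lemma~\ref{th:topo_measure}. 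Its conclusion is that for $\mu$-almost all $\varphi$, one has $\bra{\chi}F(\varphi)\ket{\chi}\ge 0$ simultaneously for all $\ket{\chi}$, which is precisely $F(\varphi)\ge 0$ almost everywhere.

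The main conceptual obstacle is that applying Corollary~\ref{th:positive_moments} separately for each $\ket{\chi}$ only gives an uncountable family of null sets $\{E_\chi\}$, whose union is, a priori, not null. Lemma~\ref{th:topo_measure} precisely addresses this by exploiting separability of the space of $\ket{\chi}$'s together with continuity of $\phi(\cdot,\varphi)$ to replace the uncountable union by a countable one along a dense sequence, which is null by $\sigma$-subadditivity. The only minor technicality is to ensure the continuity in $\ket{\chi}$, but this is immediate from the quadratic form structure of $\bra{\chi}F(\varphi)\ket{\chi}$ in $\ket{\chi}$.
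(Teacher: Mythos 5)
Your proof is correct and takes essentially the same approach as the paper's: you sandwich the operator inequality with a fixed test vector, apply Corollary~\ref{th:positive_moments} to the resulting scalar moment condition, and then invoke Lemma~\ref{th:topo_measure} to exchange the quantifiers, exactly as the paper does. The only cosmetic difference is that you take $X$ to be the unit sphere of $\dC^D$ while the paper works with $\Omega(D)$ (pure states, so the quotient by global phase); both are separable spaces on which $\varphi\mapsto\bra{\chi}F(\varphi)\ket{\chi}$ has the required continuity, so the arguments coincide.
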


\begin{proof}
Let $\ket{\psi}$ be a state in $\dC^D$, then it follows 
from Eq.~\eqref{eq:tensor_positive_moments} that \begin{equation}
\int \md \mu(\varphi) \bra{\psi} F(\varphi) \ket{\psi} 
\ketbra{\varphi}{\varphi}^{\otimes N} \ge 0
\end{equation}
for all $N$. Therefore, according to Corollary~\ref{th:positive_moments}, we 
deduce that $\bra{\psi} F(\varphi) \ket{\psi} \ge 0$ for almost all $\varphi$ 
(with respect to $\mu$). Now consider $\bra{\psi} F(\varphi) 
\ket{\psi}=\Tr[\ketbra{\psi}{\psi}F(\varphi)]$ as a function over $\Omega(D) 
\times \Omega(d)$ and apply Lemma~\ref{th:topo_measure}. Then, we find that for 
almost all $\varphi$ (with respect to $\mu$), $\bra{\psi} F(\varphi) \ket{\psi} 
\ge 0$  for all $\psi$.  In other words, $F(\varphi) \ge 0$ almost everywhere 
with respect to $\mu$.
\end{proof}

With all these preparations, let us prove the completeness of the hierarchy for 
the SDP with inequality constraint in Theorem~\ref{thm:sepOptIneq}.

The argument given in the main text can be broken into two steps. In the first 
step, one shows that there exists a Borel probability measure $\mu$ on pure 
states $\ketbra{\varphi}{\varphi} \in \Omega(d)$ such that
the state
\begin{equation}
\Phi_{ABC \ldots Z} = \int \md \mu (\varphi) \ketbra{\varphi}{\varphi}^{\otimes 
N}
\end{equation}
satisfies all constraints in Eq.~\eqref{eq:symExtOptIneqA} and $\lim_{N \to 
+\infty} \xi_N = \Tr (\widetilde{X}_A \otimes \Phi_A)$. The proof for this step 
given in the main text is essentially complete. One only has to keep in mind 
that, in principle, the measure $\mu$ that arises is of a general probabilistic 
Borel measure and may not correspond to a probability density function.

In the second step, one shows that almost all $\varphi$ (with respect to the 
measure $\mu$) belong to $\mathcal{P}$ as defined in 
equation~\eqref{eq:PIneq}. In the main text, an intuitive argument is given 
under the assumption that $\mu$ can be written as a well-behaved distribution.  
With all the above mathematical preparations, we can now remove this 
assumption.

We need to show that $\tL(\ketbra{\varphi}{\varphi}) \le Y$ almost everywhere.  
This is the case since we have \begin{equation}
\int \md \mu(\varphi) [Y - \tL(\ketbra{\varphi}{\varphi})] \otimes 
\ketbra{\varphi}{\varphi}^{N} \ge 0
\end{equation}
for all $N \in \dN$. Thus, according to Corollary~\ref{th:tensor_positive_moments}, 
$Y-\tL(\ketbra{\varphi}{\varphi})\ge0$ holds almost everywhere.

\section{Proof of Lemma~\ref{lem:PSDembedding}}\label{app:PSDembedding}

\begin{manuallemma}{\ref{lem:PSDembedding}}
  A matrix $\omega\in\dF^{m\times n}$ ($\dF=\dC$ or $\dF=\dR$) satisfies that 
  $\rank(\omega)\le k$ and $\norm{\omega}\le R$ if and only if there exists 
  $A\in\dF^{m\times m}$ and $B\in\dF^{n\times n}$ such that
  \begin{equation}
    \Omega:=
    \begin{bmatrix}
      A & \omega\\
      \omega^\dagger & B
    \end{bmatrix}
    \label{eq:OmegaA}
  \end{equation}
  satisfies that $\Omega\ge 0$, $\Tr(\Omega)=2R$, and $\rank(\Omega)\le k$.
\end{manuallemma}

For the proof of Lemma~\ref{lem:PSDembedding}, we take advantage of the 
following observations resulting from the inequality between the arithmetic and 
geometric means:\\
Observation~(i): For any $a,b,x\ge 0$ satisfying $ab\ge x^2$, we have that 
$a+b\ge 2x$.\\
Observation~(ii): For any $y\ge x\ge 0$, there exist $a,b\ge 0$ such that 
$ab=x^2$ and $a+b=2y$.

We first prove the ``if'' part. The rank statement is obvious because the rank 
of a submatrix is no larger than that of the whole matrix, i.e., 
$\rank(\omega)\le\rank(\Omega)\le k$. Now, we show that $\Omega\ge 0$ and 
$\Tr(\Omega)=2R$ imply that $\norm{\omega}\le R$. Consider the singular value 
decomposition of $\omega$
\begin{equation}
  \omega=U^\dagger DV,
  \label{eq:SVD}
\end{equation}
where $U$ and $V$ are unitary matrices, $D_{ii}\ge 0$, and $D_{ij}=0$ for $i\ne 
j$.  Furthermore, we have $\norm{\omega}=\sum_{i=1}^{\min\{m,n\}}D_{ii}$. Let
\begin{equation}
  \widetilde{\Omega}
  =
  \begin{bmatrix}
    U & 0\\
    0 & V
  \end{bmatrix}
  \Omega
  \begin{bmatrix}
    U^\dagger & 0\\
    0 & V^\dagger
  \end{bmatrix}
  =
  \begin{bmatrix}
    UAU^\dagger & D\\
    D^T & VBV^\dagger
  \end{bmatrix}.
  \label{eq:OmegaTilde}
\end{equation}
Then, $\Omega\ge 0$ implies that
\begin{equation}
  (UAU^\dagger)_{ii}(VBV^\dagger)_{ii}\ge D_{ii}^2,
\end{equation}
for $i=1,2,\dots,\min\{m,n\}$. Thus, Observation (i) implies that
\begin{equation}
  (UAU^\dagger)_{ii}+(VBV^\dagger)_{ii}\ge 2D_{ii},
\end{equation}
for $i=1,2,\dots,\min\{m,n\}$, whose summation gives that
\begin{equation}
  \begin{aligned}
    \Tr(\Omega)=&\sum_{i=1}^m
    (UAU^\dagger)_{ii}+\sum_{i=1}^n(VBV^\dagger)_{ii}\\
    \ge&\sum_{i=1}^{\min\{m,n\}}
    \left[(UAU^\dagger)_{ii}+(VBV^\dagger)_{ii}\right]\\
    \ge&2\sum_{i=1}^{\min\{m,n\}}D_{ii}.
  \end{aligned}
\end{equation}
Hence, $\Tr(\Omega)=2R$ gives that 
$\norm{\omega}=\sum_{i=1}^{\min\{m,n\}}D_{ii}\le R$.

To prove the ``only if'' part, we again consider the decomposition in 
Eq.~\eqref{eq:OmegaTilde}. Then, $\rank(\omega)\le k$ implies that $D_{ii}=0$ 
when $i>k$. One can easily verify that $\Omega$ satisfies that $\Omega\ge 0$ 
and $\rank(\Omega)=\rank(\widetilde{\Omega})\le k$ when
\begin{equation}
  \begin{aligned}
    &(UAU^\dagger)_{ij}=0
    ~~\text{ for }i\ne j \text{ and } i=j>k,\\
    &(VBV^\dagger)_{ij}=0
    ~~\text{ for }i\ne j \text{ and } i=j>k,\\
    &(UAU^\dagger)_{ii}\ge 0,~(VBV^\dagger)_{ii}\ge 0
    ~~\text{ for } i=1,2,\dots k,\\
    &(UAU^\dagger)_{ii}(VBV^\dagger)_{ii}=D_{ii}^2
    ~~\text{ for } i=1,2,\dots k.\\
  \end{aligned}
\end{equation}
Then, Observation~(ii) and the bound constraint 
$\norm{\omega}=\sum_{i=1}^kD_{ii}\le R$ imply that we can choose suitable 
$(UAU^\dagger)_{ii}$ and $(VBV^\dagger)_{ii}$ for $i=1,2,\dots,k$ such that 
$\Tr(\Omega)=\Tr(\widetilde{\Omega})
=\sum_{i=1}^k[(UAU^\dagger)_{ii}+(VBV^\dagger)_{ii}]=2R$.

\bibliography{QuantumInf}

\end{document}